\documentclass[a4paper,DIV=14]{scrartcl}
\usepackage[utf8]{inputenc}

\newenvironment{keywords}%
{\begin{quotation} \noindent \textbf{Keywords:}}
{\end{quotation}}

\setkomafont{title}{\normalfont\bfseries}
\setkomafont{section}{\normalfont\bfseries\LARGE}
\setkomafont{subsection}{\normalfont\bfseries\Large}
\setkomafont{subparagraph}{\normalfont\bfseries}

\usepackage[backend=bibtex,style=alphabetic,citestyle=alphabetic,natbib=true,maxbibnames=99]{biblatex} 
\addbibresource{lit.bib} 

\usepackage[autostyle=true]{csquotes} 

\usepackage{amsmath}
\usepackage{amsthm}
\usepackage{amssymb}
\usepackage{microtype}
\usepackage{stmaryrd}
\usepackage{complexity}
\usepackage{enumerate}

\theoremstyle{plain}
\newtheorem{theorem}{Theorem}[section]
\newtheorem{lemma}[theorem]{Lemma}
\newtheorem{fact}[theorem]{Fact}

\theoremstyle{definition}
\newtheorem{definition}[theorem]{Definition}

\usepackage{hyperref}

\newcommand{\N}{\ensuremath{\mathbb{N}}}

\newcommand{\set}[2]{\ensuremath{\left\{ #1 \mid #2 \right\}}}

\newcommand{\minorfree}[1]{\ensuremath{   \mathrm{MF}(#1)  }}

\newcommand{\tpall}{\ensuremath{   \mathbb{P}_{\mathrm{all}}  }}
\newcommand{\tpfin}{\ensuremath{   \mathbb{P}_{\mathrm{fin}}  }}

\newcommand{\amred}{\ensuremath{   \leq^{\ccex}_{\mathrm{m}}   }}

\newcommand{\pmred}{\ensuremath{   \leq^{\P}_{\mathrm{m}}   }}

\newcommand{\timecred}{\ensuremath{   \leq^{\TIME(n^c)}_{\mathrm{m}}   }}

\newcommand{\pmredu}{\ensuremath{   \leq^{\P}_{\mathrm{m},{\mathrm{u}}}   }}
\newcommand{\amredu}{\ensuremath{   \leq^{\ccex}_{\mathrm{m},{\mathrm{u}}}   }}

\newcommand{\mysufptred}{\ensuremath{   \leq^{\mathrm{fpt}'}   }}
\newcommand{\timecredu}{\ensuremath{   \leq^{\TIME(n^c)}_{\mathrm{m},\mathrm{u}}   }}
\newcommand{\timecredsu}{\ensuremath{   \leq^{\TIME(n^c)}_{\mathrm{m},\mathrm{su}}   }}


\newcommand{\parami}[1]{\ensuremath{ {\langle #1 \rangle}_{\mathrm{nu}} }}
\newcommand{\paramui}[1]{\ensuremath{ {\langle #1 \rangle}_{\mathrm{u}} }}

\newcommand{\paramsui}[1]{\ensuremath{ {\langle #1 \rangle}_{\mathrm{su}} }}


\newcommand{\ccex}{\ensuremath{\ComplexityFont{A}}}

\newcommand{\fptred}{\ensuremath{\leq^{\mathrm{fpt}}}}

\begin{document}

\title{Fundamentals of Parameterized Complexity Revisited} 
\author{Maurice Chandoo\footnote{Leibniz Universität Hannover,
        Institut für Theoretische Informatik,
        Appelstr.~4, 30167 Hannover, Germany; \hskip 2em      
        E-Mail: \href{mailto:chandoo@thi.uni-hannover.de}{chandoo@thi.uni-hannover.de} }}
\date{\vspace{-5ex}}

\maketitle

\begin{abstract}\noindent\textbf{Abstract.}
    Flum and Grohe define a parameter (parameterization) as a function $\kappa$ which maps words over a given alphabet to natural numbers. They require such functions to be polynomial-time computable. We show how this technical restriction can be lifted without breaking the theory. More specifically, instead of $\kappa$ we consider the set of languages that it bounds as parameterization and define the basic notions of parameterized complexity in terms of promise problems, which completely replace slices. One advantage of this formalization is that it becomes possible to interpret any complexity-theoretic concept which can be considered on a restricted set of inputs as a parameterized concept. Moreover, this formalization provides a unified way to apply the parameterization paradigm to other kinds of complexity such as enumeration or approximation by simply defining promise problems. 
\end{abstract}

\begin{keywords}
    parameterization, promise problem, slice,  uniformity
\end{keywords}

\section{Introduction}

The purpose of parameterized complexity is to provide a different approach to deal with algorithmic problems that are deemed to be intractable in the classical sense. Instead of measuring the complexity of an algorithm solely in terms of the input length one considers other numerical parameters of the input instance as well. For example, the vertex cover problem is $\NP$-hard and therefore intractable in the classical sense. However, if one additionally measures the runtime in terms of the size of the vertex cover $k$ then it is possible to solve this problem in time $2^{\mathcal{O}(k)} \cdot n^{\mathcal{O}(1)}$. If $k$ is sufficiently small for the instances that one wants to solve, this can be considered tractable. This refined notion of efficiency is called fixed-parameter tractability and has become part of the canon in computational complexity.

There is the original formalization by Downey and Fellows which defines a parameterized problem as a subset of $\Sigma^* \times \N$ \cite{downey} and there is the formalization by Flum and Grohe which defines a parameterized problem as a tuple $(L,\kappa)$ where $L$ is a language and $\kappa$ is a polynomial-time computable function which maps instances to their parameter values \cite{flum}. Every formal parameterized problem $(L,\kappa)$ can be translated to $\set{(x,k)}{ x \in L , k = \kappa(x) }$. In the other direction, a subset $L$ of $\Sigma^* \times \N$ can be mapped to $(L',\kappa')$ where $L' = \set{ x\#1^k }{ (x,k) \in L }$ and $\kappa(y) = k$ if $y$ is of the form $x\#1^k$ and $1$ otherwise. Both translations preserve fixed-parameter tractability of the parameterized problem. Consequently, if one is primarily interested in whether a given parameterized problem is fixed-parameter tractable the difference between the two formalizations is immaterial. However, if one is interested in more basic questions such as ``what is a parameterized problem'' or ``what is a parameterized algorithm'' then it is worth to examine these two formalizations more closely. 

The crucial difference between these two formalizations is the `position' of the parameter value. 
In Downey and Fellows' formalization it is part of the input whereas in the case of  Flum and Grohe's formalization it is computed from the input. However, Flum and Grohe have to make the additional requirement that computing the parameter value must be possible in polynomial-time; without it their theory of FPT would `break' for reasons that we explain in Section~\ref{ss:fg}. Additionally, some commonly considered parameterizations such as tree-width or clique-width are $\NP$-hard and thus violate this requirement. They mention that such parameterizations can be considered nonetheless by making the parameter value part of the input. Consistently applying this `quick fix' leads back to Downey and Fellows' formalization. 
We show that it is possible to remove the polynomial-time computability restriction without breaking the theory. In our formalization the parameter value is neither part of the input nor computed from it; it is provided as promise in a certain sense.

\subparagraph{Overview of the paper.} 
The two main observations on which our formalization is based are the following. First, we consider a parameterized algorithm to be a sequence of algorithms $A_1,A_2,\dots$ where $A_k$ solves all instances whose parameter value is at most $k$. Secondly, for a function $\kappa \colon \Sigma^* \rightarrow \N$ we say a language $L$ over $\Sigma$ is bounded by $\kappa$ if there exists a $k \in \N$ such that $\kappa(w) \leq k$ holds for all $w \in L$. Let $\mathbb{K}(\kappa)$ denote the set of languages bounded by $\kappa$. Instead of $(L,\kappa)$ we consider $(L,\mathbb{K}(\kappa))$ as a parameterized problem. Intuitively, we see $(L,\mathbb{K}(\kappa))$ as the set of promise problems $(L,P)$ with $P \in \mathbb{K}(\kappa)$. This choice is justified as follows.
Observe that the concrete parameter values do not influence the complexity of a parameterized problem, e.g.~$(L,\kappa)$ and $(L, f(\kappa(x)))$  with $f(n) = 2^n$ have the same complexity. This can be interpreted as the fact that a parameter $\kappa$ is just a representation of a higher-level object which actually determines the parameterized complexity: its normalized inverse $\mathbb{K}(\kappa)$. Consequently, one might argue that complexity classes and reductions should be defined in terms of $\mathbb{K}(\kappa)$ rather than $\kappa$ for the same reason that graph theory is defined in terms of graphs as opposed to adjacency matrices.
More formally, let $\kappa,\tau$ be parameters over some alphabet $\Sigma$. We show that if $\kappa$ and $\tau$ bound the same set of languages, i.e.~$\mathbb{K}(\kappa) = \mathbb{K}(\tau)$, then the parameterized complexity of $(L,\kappa)$ and $(L,\tau)$ is identical for any language $L$ (see Section~\ref{sec:connection}).

In Section~\ref{sec:crit} we recapitulate the existing definitions and raise some theoretical issues that we see with them.
In Section~\ref{sec:nar} we show how to define parameterized complexity classes and reductions in terms of $(L,\mathbb{K}(\kappa))$. While the definitions provided here are given for decision problems, they easily translate to other kinds of problems. Moreover, slices become superfluous since they are replaced by promise problems. In the beginning of the section we provide a narrative which leads up to this formalization. In Section~\ref{sec:connection} we show how our definitions formally relate to the ones given by Flum and Grohe; their definitions are a special case of ours. In Section~\ref{sec:final} we summarize our perspective on parameterized complexity and the rationale that led us to the definitions presented here. For additional motivation, it might be advisable to skim over the last section before starting with Section~\ref{sec:nar}.

\section{Criticism of Existing Formalizations}
\label{sec:crit}
\subsection{Downey and Fellows}
\label{ss:df}
Downey and Fellows define a parameterized problem $L$ as a subset of $\Sigma^* \times \Sigma^*$ for an alphabet $\Sigma$. They remark that the parameter value (the second component) need not be a numerical value in principal. For example, it is reasonable that an algorithm which solves a problem parameterized by tree-width expects a tree-decomposition as additional input. However, they also say that it can be assumed w.l.o.g.~that the parameter value is numerical, i.e.~$L \subseteq \Sigma^* \times \N$. A parameterized problem $L$ is called fixed-parameter tractable if there exists an algorithm $A$, a constant $c \in \N$ and a computable function $f \colon \N \rightarrow \N$ such that for all $x \in \Sigma^*$ and $k \in \N$ the algorithm $A$ accepts $(x,k)$ iff $(x,k) \in L$ and $A$ runs in time $f(k)\cdot n^c$ with $n = |x|$.

In natural language a parameterized problem is commonly expressed as $X$ parameterized by $Y$. For example, the vertex cover problem parameterized by solution size or the Hamiltonian cycle problem parameterized by tree-width. Formalizing these two examples requires some further thought. For instance, the vertex cover problem parameterized by solution size can either be formalized as 
\begin{itemize}
    \item $\set{(G,k) \in \Sigma^* \times \N }{ G \text{ has a vertex cover of size } k} $, or
    \item $\set{(G,k,k') \in \Sigma^* \times \N \times \N }{ G \text{ has a vertex cover of size } k \text{ and } k' = k} $.
\end{itemize}
The second example might seem contrived at first but the rationale behind it becomes apparent when we consider how to formalize the second problem:
$$\set{(G,k) \in \Sigma^* \times \N }{G \text{ is Hamiltonian and twd($G$)} = k}$$
Generally speaking, we need a mechanism to translate a decision problem along with a parameterization to a parameterized problem. This mechanism should be independent of the decision problem that we want to translate. In the case of the Hamiltonian cycle problem a numerical component (the parameter) is added to the original decision problem, which is functionally determined by the original instance $G$. This is consistent with the second formalization of the vertex cover problem but not with the first one. In the case of the first formalization the numerical part of the original instance is taken away and casted as parameter value. This is based on the coincidence that the parameter value is already part of the original input and thus arbitrary.  

Every instance $G$ of the Hamiltonian cycle problem is associated with a parameter value, i.e.~the tree-width of $G$. Some of this information is lost in the formalized version. Consider a graph parameter $\lambda$ such that $\lambda(G)$ is the tree-width of $G$ whenever $G$ is Hamiltonian and otherwise $\lambda(G)$ is an arbitrary natural number. The formalized version of the Hamiltonian cycle problem parameterized by $\lambda$ is identical to the one parameterized by tree-width. This implies that the parameter value of negative instances (in this case non-Hamiltonian graphs) is irrelevant in parameterized complexity. This might be true as long as one only considers problems in $\NP$. The common approach to solve such a problem is to find a witness for the given instance. The parameter value (partially) determines how much time the algorithm is granted to find such a witness. Since negative instances do not posses a witness the parameter value is irrelevant. This implication becomes questionable when considering problems beyond $\NP$ in the parameterized context. For example, in \cite[Thm.~3]{ordyniak} an fpt-algorithm for QBF parameterized by dependency tree-width (a novel parameter) is given. If the given QBF is false then the algorithm outputs a (Q-resolution) refutation of size $\mathcal{O}(3^k n)$.  This indicates that the parameter value of negative instances can carry meaningful information which should not be discarded. However, this seems to be irreconcilable with the definition of a parameterized problem as a subset of $\Sigma^* \times \N$.

Another issue occurs when considering closure under complement. Assume that a graph property $X$ parameterized by some graph parameter $\lambda$ is fixed-parameter tractable via an algorithm $A$. This means $A$ accepts the input $(G,k)$ iff $G \in X$ and $\lambda(G) = k$. 
Let $\mathcal{G}$ denote the set of all graphs. 
Intuitively, it should hold that the complement $\overline{X} = \mathcal{G} \setminus X$ parameterized by $\lambda$ is fixed-parameter tractable as well by taking $A$ and flipping its answers. However, if one flips the answers of $A$ then it recognizes $\set{(G,k)}{ G \notin X \vee \lambda(G) \neq k }$  which is different from $\set{(G,k)}{ G \notin X \wedge \lambda(G) = k }$. This brings us to the next issue: what is the intuitive parameterized problem that a set such as $\set{(G,k)}{ G \notin X \vee \lambda(G) \neq k }$ or $\set{(x,k)}{x \in L \wedge k \text{ is prime}}$ for some language $L$ represents? Is it possible to sensibly interpret every subset of $\Sigma^* \times \N$ as parameterized problem or does this definition of a parameterized problem over-approximate the intuitive notion? 

\subsection{Flum and Grohe}
\label{ss:fg}
A parameter over an alphabet $\Sigma$ is a total, polynomial-time computable function $\kappa \colon \Sigma^* \rightarrow \N$ (they call $\kappa$ parameterization but we reserve this term for something else). 
A parameterized problem is a tuple $(L,\kappa)$ where $L$ is a language and $\kappa$ is a parameter, both over the same alphabet. A parameterized problem $(L,\kappa)$ over $\Sigma$ is fixed-parameter tractable (in $\FPT$) if there exists an algorithm $A$, a constant $c \in \N$ and a computable function $f \colon \N \rightarrow \N$ such that for all $x \in \Sigma^*$ it holds that $A$ accepts $x$ iff $x \in L$ and $A$ runs in time $f(\kappa(x)) \cdot n^c$ with $n = |x|$. The $i$-th slice of a parameterized problem $(L,\kappa)$ is defined as $(L,\kappa)_i := \set{x \in L}{ \kappa(x) = i}$ for $i \in \N$. A parameterized problem $(L,\kappa)$ is in $\XP_{\mathrm{nu}}$ if $(L,\kappa)_i$ is in $\P$ for all $i \in \N$. 

This definition matches the natural language description of a parameterized problem. However, the requirement that $\kappa$ needs to be polynomial-time computable defies intuition. For instance, the straightforward formalization of the Hamiltonian cycle problem parameterized by tree-width would be $(\set{G \in \mathcal{G}}{\text{$G$ is Hamiltonian}},\textrm{twd})$. But this formalization is illegitimate since tree-width is not polynomial-time computable unless $\P = \NP$. Flum and Grohe remark that this is a technical requirement and non-polynomial-time computable parameters can be considered nonetheless by modifying the original problem such that the parameter value becomes part of the input. As stated previously, consistently applying this quick fix---rather than arbitrarily based on the complexity of the parameter---essentially leads back to Downey and Fellows' formalization. 

To understand the significance of this requirement let us consider what happens if it is dropped. One consequence is that the parameter value $\kappa(x)$ of an instance $x$ cannot be computed by an fpt-algorithm (or even XP-algorithm) anymore and thus becomes inaccessible whenever $\kappa$ is not polynomial-time computable. 
Another consequence is that $\FPT \subseteq \XP_{\mathrm{nu}}$ does not longer hold. 
Let $A$ be a language over $\Sigma$ which is not in $\P$ and for $x \in \Sigma^*$ let $\kappa(x) = 1$ if $x \in A$ and $2$ otherwise. It holds that $(\Sigma^*,\kappa)_1 = A $ is not in $\P$ and thus $(\Sigma^*,\kappa)$ is not in $\XP_{\mathrm{nu}}$. The problem is trivially in $\FPT$ via the algorithm which always accepts. 

The notion of a parameterized problem ontologically precedes complexity classes. Adjusting its definition in order to resolve issues with theoretically arbitrary complexity classes  ($\FPT$ and $\XP_{\mathrm{nu}}$) seems questionable. If parameterized complexity had focused on space instead of time efficiency from the beginning, an analogous train of thought would have led one to conclude that a parameter should be logspace computable (the logspace counterparts of $\FPT$ and $\XP_{\mathrm{nu}}$ exhibit the same two issues described above). The definition of what a parameterized problem is, should not change depending on the complexity classes under consideration. However, this does happen: for instance, in \cite{elberfeld} a parameter is required to be first-order computable because the authors consider small circuit complexity classes. 

The following critique concerning slices applies to both formalizations.
An issue with slices is that they conflate problem and parameter complexity. Consider the problem $(\Sigma^*,\kappa)$ used to show that $\FPT \not\subseteq \XP_{\mathrm{nu}}$. Despite the fact that its problem part $\Sigma^*$ is trivial it is not in $\XP_{\mathrm{nu}}$, a class that should intuitively be a superset of $\FPT$. In this case the complexity of the parameterized problem is dominated by the complexity of the parameter. Should the parameter complexity influence the complexity of a parameterized problem? 
If one is convinced that the problem $(\Sigma^*,\kappa)$ is of trivial complexity regardless of whatever $\kappa$ might be then the answer is no. This conflation also occurs whenever a parameterized algorithm needs to compute $\kappa(x)$. 

\section{From Classical To Parameterized Complexity}
\label{sec:nar} 
Imagine you want to design a polynomial-time algorithm for the graph isomorphism problem. After some failed attempts you settle for trying to solve this problem for restricted graph classes. After a while, you find a polynomial-time algorithm $A$ which solves isomorphism for planar graphs. Since planar graphs can be characterized as $(K_{3,3},K_{5})$-minor free graphs, you try to analyze your algorithm $A$ from this perspective. You notice that the algorithm $A$ relies on the fact that the input graphs are $K_{3,3}$-minor free at only one particular step. After a while, you figure out a way to modify $A$ such that it does not rely on this assumption anymore and call the new algorithm $A'$. Stated differently, $A'$ is a polynomial-time isomorphism test for $K_5$-minor free graphs. Naturally, you wonder whether there is anything special about the number 5 in $K_5$ or whether $A'$ can be modified such that it works for $K_6$-minor free graphs as well. Eventually, you manage to show that for every $i \in \N$ there is a polynomial-time algorithm $A_i$ which solves isomorphism for $K_i$-minor free graphs. 

We call a sequence of algorithms $A_1,A_2,\dots$ with the following properties a parameterized algorithm. First, each algorithm is efficient (e.g.~runs in polynomial time). Secondly, we want this sequence of algorithms to be monotone in the sense that each subsequent algorithm solves more and more inputs correctly. Thirdly, to prevent that there might be inputs for which no algorithm works, we require that for every input $x$ there exists an $i \in \N$ such that $A_i$ correctly solves $x$. In a sense one tries to approximate an efficient algorithm. The sequence of algorithms from the previous paragraph satisfies all of these requirements and, hence, is a parameterized algorithm for the graph isomorphism problem.

Seeing a parameterized algorithm as a sequence of algorithms is not new at all. For example, in \cite{bltw} the main result is stated as: for every $k \in \N$ there exists a linear-time algorithm $A_k$ that decides whether a given graph $G$ has tree-width at most $k$, and if so, outputs a tree-decomposition of $G$ with tree-width at most $k$. 

It is trivial to find a parameterized algorithm for any problem $X$: let $A_i$ be the algorithm which correctly solves all inputs up to length $i$ by using a look-up table. The other extreme is that we have a single efficient algorithm $A$ which solves $X$ for all inputs and thus we can set $A_i = A$ for all $i \in \N$, which makes this approach superfluous. In order to distinguish such trivial parameterized algorithms from interesting ones we need to formalize what exactly it is that is solved by a parameterized algorithm.
 
The problem $X$ restricted to inputs from $P$ (the promise) is called promise problem $(X,P)$. We say an algorithm $A$ solves $(X,P)$ if it correctly (w.r.t.~$X$) solves every input $x$ from $P$.
A parameterized algorithm $A_1,A_2,\dots$ solves a sequence of promise problems $(X,P_1),(X,P_2),\dots$. 
The second property of a parameterized algorithm implies that $P_1 \subseteq P_{2} \subseteq \dots$. Let us call the sequence $(P_i)_{i \in \N}$ a parameterization. 
Consider a problem $X$ and two parameterizations $\bar{P} = (P_i)_{i \in \N}, \bar{P'} = (P'_i)_{i \in \N}$.
We say $\bar{P}$ and $\bar{P'}$ are equivalent if for all $i \in \N$ there exists a $j \in \N$ such that $P_i \subseteq P'_j$ and vice versa ($\forall i \: \exists j : P'_i \subseteq P_j$). 
If $\bar{P}$ and $\bar{P'}$ are equivalent then a parameterized algorithm $(A_i)_{i \in \N}$ solves $(X,\bar{P})$ iff it solves $(X,\bar{P'})$ (up to reindexing of
the parameterized algorithm). For instance, suppose that $A_i$ solves $(X,P_i)$ for all $i \in \N$. Since for every $i \in \N$ there exists a $j \in \N$ such that $P'_i \subseteq P_j$ it follows that $(X,P'_i)$ is solved by $A_j$. 
Hence, if one's intention is to prove (or disprove) the existence of a parameterized algorithm for a particular problem the choice between two equivalent parameterizations is irrelevant.  
To eliminate this ambiguity we consider the normalization $\set{P}{ \exists i \in \N : P \subseteq P_i}$ of a parameterization $\bar{P}$. Two parameterizations are equivalent iff their normalizations coincide. From now on we use the term parameterization to denote the normalized entity.

\subsection{Parameterizations}

\begin{definition}
    Let $\mathbb{P}$ be a set of languages over an alphabet $\Sigma$. We call $\mathbb{P}$ parameterization if:
    \begin{enumerate}
        \item $\mathbb{P}$ is closed under subsets and finite union
        \item $\mathbb{P}$ contains the singleton language $\{x\}$ for all $x \in \Sigma^*$
        \item there exists a countable subset $\{ P_1, P_2, \dots \}$ of $\mathbb{P}$ such that for all $P \in \mathbb{P}$ there exists an $i \in \N$ with $P \subseteq P_i$
    \end{enumerate}    
\end{definition}

Let us make some remarks about the structure of parameterizations.
We call the set of all languages ($\tpall$) and the set of all finite languages ($\tpfin$) trivial parameterizations. For two sets of languages $\mathbb{A},\mathbb{B}$ over an alphabet $\Sigma$ let $\mathbb{A} \,\sqcap\, \mathbb{B} := \set{ C \subseteq \Sigma^* }{ \exists A \in \mathbb{A}, B \in \mathbb{B} : C = A \cap B }$. We define $\mathbb{A} \,\sqcup\, \mathbb{B}$ analogously, i.e.~replace `$\cap$' with `$\cup$' in the previous sentence. 
If $\mathbb{P}$ and $\mathbb{P}'$ are parameterizations then $\mathbb{P} \,\sqcap\, \mathbb{P}'$ and $\mathbb{P} \,\sqcup\, \mathbb{P}'$ are parameterizations as well. Also,  $\tpfin \subseteq \mathbb{P} \subseteq \tpall$ holds for all parameterizations $\mathbb{P}$. Therefore the set of parameterizations forms a bounded lattice. Due to closure under subsets it follows that $\mathbb{P} \,\sqcap\, \mathbb{P}' = \mathbb{P} \cap \mathbb{P}'$. For a language $L$ over $\Sigma$ let $\overline{L}$ denote its complement $\Sigma^* \setminus L$. If $\mathbb{P}$ is a parameterization then it holds that $\set{\overline{P}}{ P \in \mathbb{P}}$ is a parameterization iff $\mathbb{P} = \tpall$. 

\begin{definition}
    A parameterized problem over an alphabet $\Sigma$ is a tuple $(L,\mathbb{P})$ where $L$ is a language and $\mathbb{P}$ is a parameterization, both over $\Sigma$.     
\end{definition}

The definition of  parameterization can be directly applied to any countable set of objects such as graphs or integers. In the following we give four examples of parameterizations for the set of all unlabeled graphs $\mathcal{G}$.

For a graph $H$ we say a graph class $\mathcal{C}$ (a set of unlabeled graphs) is $H$-minor free if no graph in $\mathcal{C}$ contains $H$ as minor. 
 For a graph class $\mathcal{H}$ let $\minorfree{\mathcal{H}}$ denote the set of graph classes that are $H$-minor free for some $H$ in $\mathcal{H}$. 
The set of graph classes $\minorfree{\mathcal{G}}$ forms a parameterization. 
For closure under union consider two graph classes $\mathcal{C},\mathcal{D}$ in  $\minorfree{\mathcal{G}}$ such that $\mathcal{C}$ is $H$-minor free and $\mathcal{D}$ is $H'$-minor free for some graphs $H,H'$. It holds that $\mathcal{C} \cup \mathcal{D}$ is $(H'')$-minor free where $H''$ denotes the disjoint union of $H$ and $H'$. Secondly, each singleton graph class $\{G\}$ is $H$-minor free where $H$ is an arbitrary graph that has more vertices than $G$.
To show that the third condition holds let $f \colon \N \rightarrow \mathcal{G}$ be a bijective function. 
The required countable subset of $\minorfree{\mathcal{G}}$ is $\mathcal{C}_1,\mathcal{C}_2,\dots$ where $\mathcal{C}_i$ is the set of $f(i)$-minor free graphs. This parameterization is also known as the set of proper minor-closed graph classes closed under subsets.
A notable subset of $\minorfree{\mathcal{G}}$ is $\minorfree{\mathrm{Planar}}$, which is also a parameterization. The set $\minorfree{\mathrm{Planar}}$ is probably more commonly known as the set of graph classes with bounded tree-width; this equality is proved in \cite{robertson}.

The following parameterization admits various characterizations.
A graph class $\mathcal{C}$ is called sparse if there exists a $c \in \N$ such that every graph $G$ in $\mathcal{C}$ has at most $cn$ edges where $n$ is the number of vertices in $G$. A graph class $\mathcal{C}$ is called uniformly sparse if there exists a $c \in \N$ such that every graph $G$ which occurs as induced subgraph of some graph in $\mathcal{C}$ has at most $cn$ edges. A graph class is called hereditary if it is closed under vertex deletion. 
The arboricity of a graph $G$ is the least $k \in \N$ such that there exist $k$ forests $F_1,\dots,F_k$ with the same vertex set as $G$ and $E(G) = \cup_{i=1}^k E(F_i)$.  The degeneracy of a graph $G$ is the least $k \in \N$ such that every induced subgraph $H$ of $G$ has a vertex with degree at most $k$. The following equalities are well-known and not difficult to prove:
$$ \text{Uniformly Sparse} = [\mathrm{Sparse} \cap \mathrm{Hereditary}]_{\subseteq} = \text{bounded degeneracy} = \text{bounded arboricity} $$
where $[\cdot]_{\subseteq}$ denotes closure under subsets. Every proper minor-closed graph class is sparse and hereditary. 

The set of graph classes which do not contain the complete bipartite graph $K_{i,j}$ as (not necessarily induced) subgraph for some $i,j \in \N$ forms a parameterization. This parameterization has been considered for the dominating set problem in \cite{philip}. Every graph with degeneracy at most $d$ is $K_{d+1,d+1}$-free. The parameterizations are related as follows:
$$ \minorfree{\text{Planar}} \subseteq \minorfree{\mathcal{G}} \subseteq \text{Uniformly Sparse} \subseteq K_{i,j}\text{-free}  $$ 

The subset relation measures the difficulty of parameterizations. 
Let $X$ be a graph property such as being Hamiltonian. The parameterized problem $(X,\minorfree{\mathcal{G}})$ is intuitively at least as hard as $(X,\minorfree{\text{Planar}})$. 

Many decision problems include a numerical component which specifies the solution size, e.g.~the clique problem can be defined as $\mathrm{Clique} = \set{ (G,k) \in \mathcal{G} \times \N}{ G \text{ has a $k$-clique} }$. A parameterization of this problem consists of two parts: a parameterization $\mathbb{P}_1$ of $\mathcal{G}$ and $\mathbb{P}_2$ of $\mathbb{N}$. The first part $\mathbb{P}_1$ is called structural parameterization. The parameterized clique problem is represented by $(\mathrm{Clique}, \tpall \times \tpfin)$ in our formalism; here $\tpall$ means the set of all graph classes and $\tpfin$ means the set of all finite subsets of $\mathbb{N}$.

\subsection{Complexity Classes}

\begin{definition} A promise problem over an alphabet $\Sigma$ is a tuple $(L,P)$ where $L$ and $P$ are languages over $\Sigma$ and $P$ is called promise. Let $\ccex$ be a set of languages. We say $(L,P)$ is in $\ccex$ if there exists a language $L'$ in $\ccex$ such that $L \cap P = L' \cap P$.    
\end{definition}

The condition $L \cap P = L' \cap P$ can be equivalently stated as $x \in L \Leftrightarrow x \in L'$ for all $x \in P$. The latter formulation makes it clear that a promise problem only has to be correctly solved on inputs from its promise. We use the former formulation due to its brevity. 

\begin{definition}[Non-Uniform]
    Let $\ccex$ be a set of languages. The set of parameterized problems $\parami{\ccex}$ is defined as follows. A parameterized problem $(L,\mathbb{P})$ is in $\parami{\ccex}$ if the promise problem $(L,P)$ is in $\ccex$ for all $P \in \mathbb{P}$. 
\end{definition}

The parameterized interpretation $\parami{\ccex}$ of a set of languages $\ccex$ is a generalization in the sense that $L$ is in $\ccex$ iff $(L,\tpall)$ is in $\parami{\ccex}$. Therefore no information is lost in this interpretation. Observe that unlike slices (the de facto standard for defining non-uniform parameterized complexity classes) the concept of a promise problem, i.e.~solving a problem on a restricted set of inputs, naturally translates to all kinds of complexity such as approximation, enumeration or counting. Consequently, it is straightforward to define non-uniform parameterized complexity classes for them.

We shall use complexity class as an informal term in the following sense. We call a countable set of algorithms a complexity class. A decision algorithm $M$ decides the language $L(M)$. We also use the name of the complexity class to refer to the set of languages that are decided by one of its decision algorithms. For instance, $\P$ is the set of deterministic Turing machines (DTMs) which run in polynomial time.

\begin{definition}
	Let $\Sigma$ be an alphabet. We call a computable, total function $\kappa \colon \Sigma^* \rightarrow \N$ a parameter. 
    For $c \in \N$ let $\kappa_c$ denote the set $\set{x \in \Sigma^*}{ \kappa(x) \leq c}$. A language $L$ over $\Sigma$ is bounded by $\kappa$ if there exists a $c \in \N$ such that $L \subseteq \kappa_c$. We write $\mathbb{K}(\kappa)$ to denote the set of languages bounded by $\kappa$.    
\end{definition}

\begin{definition}[Uniform]
	Let $\ccex$ be a complexity class. The set of parameterized problems $\paramui{\ccex}$ is defined as follows.    
	A parameterized problem $(L,\mathbb{P})$  over an alphabet $\Sigma$ is in $\paramui{\ccex}$ if there exists a computable sequence of algorithms $(M_k)_{k \in \N}$ from $\ccex$ and a parameter $\kappa$ over $\Sigma$ such that $\mathbb{P} \subseteq \mathbb{K}(\kappa)$ and $L \cap \kappa_i = L(M_i) \cap \kappa_i$ for all $i \in \N$. 
	\label{def:upc}
\end{definition}

The condition $\mathbb{P} \subseteq \mathbb{K}(\kappa)$ ensures that for every $P \in \mathbb{P}$ there exists an $i \in \N$ such that $P \subseteq \kappa_i$ and thus $(L,P)$ is solved by $M_i$. Therefore $\paramui{\ccex} \subseteq \parami{\ccex}$ holds for every complexity class $\ccex$. The parameter $\kappa$ can be seen as algorithm selector which tells us what algorithm to use for what instance, i.e.~in order to solve instance $x$ the algorithm $M_{\kappa(x)}$ can be used.

Observe that $(M_k)_{k \in \N}$ satisfies the three conditions for a parameterized algorithm that we have postulated in the beginning. If we consider algorithms from $\ccex$ to be efficient then every algorithm $M_k$ with $k \in \N$ is efficient since it is from $\ccex$. 
Monotonicity is reflected by the fact that there exists a sequence of promises $P_1 \subseteq P_2 \subseteq \dots$ which characterizes $\mathbb{P}$ such that $(L,P_i)$ is solved by $M_i$. This kind of monotonicity is defined in terms of $L$ and $\mathbb{P}$. There is another kind of monotonicity defined solely in terms of $L$: we call $(M_k)_{k \in \N}$ strongly monotone (w.r.t.~$L$) if  for all instances $x$ and $i \geq 2$ it holds that $M_{i}$ solves input $x$ correctly w.r.t.~$L$ whenever $M_{i-1}$ does. We remark that monotonicity is essential for the notion of uniformity because without it one could always (as in for every problem) choose $(M_k)_{k \in \N}$ such that it contains every algorithm from $\ccex$. If the requirement that the sequence of algorithms  $(M_k)_{k \in \N}$ and $\kappa$ must be computable is dropped then the resulting definition coincides with the non-uniform one.

\begin{lemma}
	Let $\ccex$ be a complexity class. The parameterized problem $(L,\mathbb{P})$ is in $\paramui{\ccex}$ if $L$ is decidable and there exists a computable sequence of algorithms $(M_k)_{k \in \N}$ from $\ccex$ such that:
	\begin{enumerate}
		\item for all $P \in \mathbb{P}$ there exists an $i \in \N$ such that $L \cap P = L(M_i) \cap P$		
		\item $(M_k)_{k \in \N}$ is strongly monotone w.r.t.~$L$
		\item for all $i \in \N$ the algorithm $M_i$ eventually halts for all inputs (is total)
	\end{enumerate}
    \label{lem:usc}
\end{lemma}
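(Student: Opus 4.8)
The plan is to keep the given sequence $(M_k)_{k\in\N}$ unchanged and to construct the required parameter $\kappa$ explicitly as the ``first correct index''. Concretely, for $x\in\Sigma^*$ set $\kappa(x)$ to be the least $k\in\N$ such that $M_k$ decides $x$ correctly with respect to $L$, i.e.~$M_k$ accepts $x$ iff $x\in L$. First I would argue this is well-defined. Every singleton $\{x\}$ lies in $\mathbb{P}$ by the second axiom of a parameterization, so hypothesis~1 applied to $P=\{x\}$ yields an index $i$ with $L\cap\{x\}=L(M_i)\cap\{x\}$, which says precisely that $M_i$ is correct on $x$. Hence the set of correct indices for $x$ is nonempty and has a minimum, so $\kappa$ is a total function.

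Next I would check that $\kappa$ is a parameter, i.e.~also computable. On input $x$ one first decides whether $x\in L$ using the decider for $L$ (which exists since $L$ is decidable), then simulates $M_1(x),M_2(x),\dots$ in turn---each simulation halting because $M_k$ is total by hypothesis~3 and the sequence $(M_k)_{k\in\N}$ is computable---and outputs the first $k$ whose verdict agrees with the membership of $x$ in $L$. This search terminates by the nonemptiness established above. Thus $\kappa$ is a parameter over $\Sigma$.

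Then I would verify the two conditions of Definition~\ref{def:upc}. For $\mathbb{P}\subseteq\mathbb{K}(\kappa)$: given $P\in\mathbb{P}$, hypothesis~1 supplies an $i$ with $L\cap P=L(M_i)\cap P$, so $M_i$ is correct on every $x\in P$, whence $\kappa(x)\le i$ for all such $x$, i.e.~$P\subseteq\kappa_i$ and $P\in\mathbb{K}(\kappa)$. For $L\cap\kappa_i=L(M_i)\cap\kappa_i$: if $x\in\kappa_i$ then $\kappa(x)\le i$, so $M_{\kappa(x)}$ is correct on $x$, and by strong monotonicity (hypothesis~2) applied inductively from index $\kappa(x)$ up to $i$ we get that $M_i$ is correct on $x$ as well, that is $x\in L\Leftrightarrow x\in L(M_i)$; this yields the claimed equality. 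Since the same sequence $(M_k)_{k\in\N}$ from $\ccex$ together with the parameter $\kappa$ witnesses Definition~\ref{def:upc}, we conclude $(L,\mathbb{P})\in\paramui{\ccex}$.

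A remark on where the content sits: the argument is largely bookkeeping, and the hypotheses are used at exactly three points---well-definedness of $\kappa$ (needs all singletons in $\mathbb{P}$ plus hypothesis~1), computability of $\kappa$ (needs decidability of $L$ and totality of every $M_i$), and the implication ``$x\in\kappa_i$ $\Rightarrow$ $M_i$ correct on $x$'' (which is precisely strong monotonicity). I expect the first two to be the only subtle points worth spelling out, since dropping totality of the $M_i$ or decidability of $L$ would leave $\kappa$ merely computable in the limit rather than a genuine parameter.
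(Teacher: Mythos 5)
Your proof is correct and follows essentially the same route as the paper: define $\kappa(x)$ as the least index $k$ with $x \in L \Leftrightarrow x \in L(M_k)$, use decidability of $L$, totality of the $M_i$ and condition~1 (via the singletons in $\mathbb{P}$) for totality and computability of $\kappa$, strong monotonicity for $L \cap \kappa_i = L(M_i) \cap \kappa_i$, and condition~1 again for $\mathbb{P} \subseteq \mathbb{K}(\kappa)$. Your write-up merely makes explicit (e.g.\ the role of the singleton axiom) what the paper's proof leaves implicit.
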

\begin{proof}
	We define a parameter $\kappa$ as follows. Let $\kappa(x)$ be the smallest $k \in \N$ such that $x \in L \Leftrightarrow x \in L(M_k)$. That $\kappa$ is a parameter (computable and total) follows from $L$ being decidable and the first and third condition. 
	We claim that $(L,\mathbb{P})$ is in $\paramui{\ccex}$ via $(M_k)_{k \in \N}$ and $\kappa$. It can be shown inductively  that $L \cap \kappa_i = L(M_i) \cap \kappa_i$ holds for all $i \in \N$ by using the second condition. It remains to argue that $\mathbb{P} \subseteq \mathbb{K}(\kappa)$. Let $P \in \mathbb{P}$. There exists an $i \in \N$ s.t.~$L \cap P = L(M_i) \cap P$. We show that $P \subseteq \kappa_i$. Let $x \in P$. It holds that $x \in L \Leftrightarrow x \in L(M_i)$. Therefore $\kappa(x) \leq i$ which implies $x \in \kappa_i$. 
\end{proof}

The parameter $\kappa$ defined in the previous proof from $L$ and $(M_k)_{k \in \N}$ induces an inclusion-maximal parameterization $\mathbb{K}(\kappa)$ in the sense that for every $(L,\mathbb{P})$ which is solved by $(M_k)_{k \in \N}$ it holds that $\mathbb{P} \subseteq \mathbb{K}(\kappa)$. Therefore if a parameterized algorithm is strongly uniform w.r.t.~$L$ it induces a unique maximal parameterization for which it solves $L$.

When trying to solve a problem from $\NP$ one often does so by computing a witness. In this case it can be assumed that an algorithm will never incorrectly output `yes'. If a sequence of algorithms $(M_k)_{k \in \N}$ has this property then it can be easily modified to ensure that it is strongly uniform. Let $(M'_k)_{k \in \N}$ be defined as follows. Let $M'_1 = M_1$ and let $M'_{i+1}$ be the algorithm which simulates $M'_i$ and $M_{i+1}$ and outputs `yes' iff one of the two simulations output `yes'. If every algorithm from $(M_k)_{k \in \N}$ runs in polynomial time then so does $(M'_k)_{k \in \N}$. 
This trick (presumably) cannot be applied to parameterized algorithms which solve problems such as QBF and in such cases being strongly monotone is a non-trivial property.    

The set of decidable languages $\R$ coincides with $\paramui{\R}$. This is in accordance with the intuition that the parameterized approach is only sensible in the context of efficiency. 
\begin{fact}
	Let $(L,\mathbb{P})$ be a parameterized problem.
	It holds that $L$ is in $\R$ iff $(L,\mathbb{P})$ is in $\paramui{\R}$.
\end{fact}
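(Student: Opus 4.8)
The plan is to prove the two implications separately, each by a short direct construction; neither direction needs the auxiliary Lemma~\ref{lem:usc}.

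For the ``only if'' direction, suppose $L \in \R$ and fix a total algorithm $M$ with $L(M) = L$. I would take the constant sequence $M_k := M$ for all $k \in \N$ together with the constant parameter $\kappa(x) := 1$. Both the sequence and $\kappa$ are computable, and each $M_k$ lies in $\R$. Since $\kappa$ is constant we have $\kappa_i = \Sigma^*$ for every $i$, hence $\mathbb{K}(\kappa) = \tpall$, so the requirement $\mathbb{P} \subseteq \mathbb{K}(\kappa)$ holds trivially (recall $\mathbb{P} \subseteq \tpall$ for every parameterization). Finally $L \cap \kappa_i = L = L(M) = L(M_i) \cap \kappa_i$ for all $i$, and therefore $(L,\mathbb{P}) \in \paramui{\R}$ by Definition~\ref{def:upc}.

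For the ``if'' direction, suppose $(L,\mathbb{P}) \in \paramui{\R}$ and fix a witnessing computable sequence $(M_k)_{k \in \N}$ from $\R$ and a parameter $\kappa$ with $L \cap \kappa_i = L(M_i) \cap \kappa_i$ for all $i$ (the inclusion $\mathbb{P} \subseteq \mathbb{K}(\kappa)$ is not even used). I would decide $L$ as follows: on input $x$, compute $i := \kappa(x)$, then compute a description of $M_i$ from $i$, then simulate $M_i$ on $x$ and return its answer. This procedure halts, because $\kappa$ is total and computable, the sequence is computable, and each $M_i$ is total; and it is correct, because $\kappa(x) = i$ gives $x \in \kappa_i$, so $x \in L \Leftrightarrow x \in L(M_i) \Leftrightarrow M_i$ accepts $x$. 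Hence $L \in \R$.

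I do not expect a genuine obstacle: the statement merely records that the uniform interpretation is vacuous when $\ccex = \R$, which fits the intuition that the parameterized viewpoint is only meaningful under an efficiency constraint. The one point worth stating carefully is in the ``only if'' direction: one should resist designing a parameter that reflects $\mathbb{P}$ faithfully, since such a parameter may fail to be computable when $\mathbb{P}$ is generated by undecidable languages. The trivial constant parameter works precisely because $\mathbb{K}(\kappa)$ can be taken to be the top element $\tpall$ of the lattice of parameterizations, for which $\mathbb{P} \subseteq \mathbb{K}(\kappa)$ is automatic.
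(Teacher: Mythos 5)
Your proof is correct, and since the paper states this Fact without giving a proof, your argument is precisely the intended routine one: for decidability implies membership, take the constant sequence $M_k := M$ with the constant parameter $\kappa \equiv 1$ (so $\mathbb{K}(\kappa)=\tpall \supseteq \mathbb{P}$), and for the converse, decide $L$ by computing $\kappa(x)$, constructing $M_{\kappa(x)}$ from the computable sequence, and simulating it, which is total and correct because $x \in \kappa_{\kappa(x)}$ and each machine from $\R$ halts on all inputs. No gaps; your closing remark about not trying to make $\kappa$ reflect $\mathbb{P}$ faithfully is exactly the right caution.
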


It follows that $\paramui{\P} \subsetneq \parami{\P}$.
Let $L$ be an undecidable language. The problem $(L,\tpfin)$ is in $\parami{\P}$ but not in $\paramui{\P}$ since this would imply that $L$ is decidable due to the previous fact.

There is another type of uniformity referred to as strongly uniform. 
It has the additional requirement that not only the sequence of algorithms $(M_k)_{k \in \N}$ has to be computable but also an upper bound on their resource usage. For each algorithm $M_i$ this upper bound is described by a  constant $c_i \in \N$. This means the function $f(i) = c_i$ is computable and the resource usage of $M_k$ is upper bounded by a function $r$ in terms of $f(k)$ and the input length. For example, for the class $\XP$ the function $r$ is $(f(k),n) \mapsto n^{f(k)}$. 

\begin{definition}[Strongly Uniform]
	Let $\ccex$ be a complexity class and let $r \colon \N \times \N \rightarrow \N$ be a computable, total function. The set of parameterized problems $\paramsui{\ccex,r}$ is defined as follows.   
	A parameterized problem $(L,\mathbb{P})$ over an alphabet $\Sigma$ is in $\paramsui{\ccex,r}$ if there exists a computable sequence of algorithms $(M_k)_{k \in \N}$ from $\ccex$, a parameter $\kappa$ over $\Sigma$ and a computable, total function $f \colon \N \rightarrow \N$ such that:
	\begin{enumerate}
		\item $(L,\mathbb{P})$ is in $\paramui{\ccex}$ via $(M_k)_{k \in \N}, \kappa$		
		\item for all $k \in \N$ and $x \in \Sigma^*$  it holds that $M_k$ on $x$ runs at most $r(f(k),|x|)$ steps
	\end{enumerate}
\end{definition}

We define $\paramsui{\P}$ as $\paramsui{\P,(f,n) \mapsto n^f}$ and $\paramsui{\TIME(n^c)}$ as $\paramsui{\TIME(n^c),(f,n) \mapsto f\cdot n^c}$ for $c \in \N$. Our counterpart of (strongly uniform) $\FPT$ is $\FPT' = \bigcup_{c \in \N} \paramsui{\TIME(n^c)}$.
Let us consider how existing results fit these definitions.

For every proper minor-closed graph class $\mathcal{C}$ the graph isomorphism problem GI on $\mathcal{C}$ can be decided in polynomial time \cite{ponomarenko}. Let $\mathrm{GI}$ be defined as $\set{(G,H) \in \mathcal{G} \times \mathcal{G}}{ G \cong H }$. Formally, this means $(\mathrm{GI},\minorfree{\mathcal{G}} \times \minorfree{\mathcal{G}})$ is in $\parami{\P}$. In \cite{grohe} it is shown that this membership is witnessed by a computable sequence of algorithms known as $k$-dimensional Weisfeiler-Lehman algorithm $W_k$ with $k \in \N$. This means for every proper minor-closed graph class $\mathcal{C}$ there exists an $i \in \N$ such that isomorphism on graphs from $\mathcal{C}$ is correctly decided by $W_i$. It holds that $W_{k+1}$ correctly decides isomorphism for a pair of graphs whenever $W_k$ does. This means the sequence of algorithms $(W_k)_{k \in \N}$ is strongly uniform. Therefore $(\mathrm{GI},\minorfree{\mathcal{G}} \times \minorfree{\mathcal{G}})$ is in  $\paramui{\P}$ due to Lemma~\ref{lem:usc}. 
The runtime of $W_k$  is $n^{\mathcal{O}(k)}$. Therefore $(\mathrm{GI},\minorfree{\mathcal{G}} \times \minorfree{\mathcal{G}})$ is in $\paramsui{\P}$ via $(W_k)_{k \in \N}$ and $f(k) = ck$ for some $c \in \N$.  

In \cite{alon} the parameterized complexity of the dominating set problem is considered.
Let $\mathrm{DS} = \set{ (G,k) \in \mathcal{G} \times \N }{ G \text{ has a dominating set $S$ with $|S| \leq k$} }$. They give an algorithm $A$ which gets a graph $G$ and $k,d \in  \N$ as input and outputs a dominating set of size at most $k$ if it exists under the promise that $G$ has degeneracy at most $d$. This algorithm runs in time $k^{\mathcal{O}(dk)} \cdot n$. The algorithm $A$ can also be seen as a sequence of algorithms $(A_{l,d})_{l,d \in \N}$ where the algorithm $A_{l,d}$ gets $(G,k) \in \mathcal{G} \times \N$ as input and outputs a dominating set of size at most $k$ under the promise that $G$ has degeneracy at most $d$ and $k \leq l$. Due to our previous remark it follows that this sequence of algorithms can be made strongly uniform. The algorithm $A_{l,d}$ runs in time $l^{cdl} \cdot n$ for some constant $c \in \N$. This means $(\mathrm{DS},\text{U.S.} \times \tpfin)$ is in $\paramsui{\TIME(n)}$ (and thus $\FPT'$) via $(A_{j,j})_{j \in \N}$ and $f(j) = j^{cj^2}$ where U.S. means uniformly sparse.  

Lower bounds from classical complexity can be easily connected to the parameterized interpretation.
If there is a promise problem $(L,P)$ for which there are reasons to believe that it is not in $\P$ (such as being $\NP$-hard) then for all parameterizations $\mathbb{P}$ with $P \in \mathbb{P}$ the parameterized problem $(L,\mathbb{P})$ can be believed to not be in $\parami{\P}$ for the same reasons.
For instance, the graph isomorphism problem for graphs with degeneracy at most two is $\GI$-complete. Therefore showing that $(\mathrm{GI},\text{U.S.} \times \text{U.S.})$ is in $\parami{P}$ is at least as hard as showing that $\mathrm{GI}$ is in $\P$ since the class of graphs with degeneracy at most two is uniformly sparse.
In fact, these statements are equivalent, i.e.~$(\mathrm{GI},\text{U.S.} \times \text{U.S.})$ is in $\parami{P}$ iff $\mathrm{GI}$ is in $\P$.

\subsection{Reductions}
\begin{definition}[Promise Reduction]
    Let $\ccex$ be a complexity class. We define $\amred$-reductions on promise problems as follows. Let $(L,P)$ and $(L',P')$ be promise problems over alphabets $\Sigma$ and $\Delta$, respectively.
    We say $(L,P) \amred (L',P')$ if there exists a total function $r \colon \Sigma^* \rightarrow \Delta^*$ such that:
    \begin{itemize}
        \item for all $x \in P$ it holds that $x \in L \Leftrightarrow r(x) \in L'$
        \item for all $x \in P$ it holds that $r(x) \in P'$
        \item the function $r$ is computable by an algorithm from $\ccex$
    \end{itemize}
\end{definition}

The first condition says that the reduction function $r$ has to correctly translate all instances which are in the promise $P$. The second condition states that every translated instance from $P$ must be in $P'$, i.e.~$r(P) = \set{r(x)}{x \in P} \subseteq P'$. 

\begin{definition}[Non-Uniform Reduction]
    Let $\ccex$ be a complexity class and let $(L,\mathbb{P})$ and $(L',\mathbb{P}')$ be parameterized problems. We say  $(L,\mathbb{P}) \amred (L',\mathbb{P}')$ if for all $P \in \mathbb{P}$ there exists a $P' \in \mathbb{P}'$ such that $(L,P) \amred (L',P')$.
\end{definition}

It is straightforward to verify that $\pmred$ is reflexive and transitive on parameterized problems. It inherits these properties from its interpretation on promise problems. Also, $\parami{\P}$ is closed under it, i.e.~$(L,\mathbb{P}) \pmred (L',\mathbb{P}')$ and $(L',\mathbb{P}')$ in $\parami{\P}$ implies $(L,\mathbb{P})$ in $\parami{\P}$.

\begin{definition}[Uniform Reduction]
	Let $\ccex$ be a complexity class and let $(L,\mathbb{P})$ and $(L',\mathbb{P}')$ be parameterized problems over  alphabets $\Sigma$ and $\Delta$, respectively. We say $(L,\mathbb{P}) \amredu (L',\mathbb{P}')$ if there exists a computable sequence of algorithms $(M_k)_{k \in \N}$ from $\ccex$ such that $M_i$ computes a total function $r_i \colon \Sigma^* \rightarrow \Delta^*$ for all $i \in \N$ and a parameter $\kappa$ over $\Sigma$ such that:
	\begin{enumerate}
		\item   $\mathbb{P} \subseteq \mathbb{K}(\kappa)$ 
		\item 	for all $i \in \N$ it holds that $(L,\kappa_i) \amred (L',r_i(\kappa_i))$ via $r_i$ and $r_i(\kappa_i) \in \mathbb{P}'$
        \item  	for all $i \in \N$ and $x \in \Sigma^*$ it holds that $\kappa(x) \leq i$ implies $r_{\kappa(x)}(x) = r_i(x)$     
	\end{enumerate}
	\label{def:urn}
\end{definition}

The third condition can be interpreted as the requirement that a reduction function $r_j$ maintains the translation of a previous reduction function $r_i$ ($i < j$) if the instance was already guaranteed to be correctly translated by $r_i$, i.e.~$x \in \kappa_i$. 

\begin{lemma}
    $\pmredu$ is transitive.  
    \label{lem:trans}  
\end{lemma}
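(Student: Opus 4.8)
The plan is to compose the two uniform reductions directly, being careful about how the parameters interact. Suppose $(L,\mathbb{P}) \amredu (L',\mathbb{P}')$ via a computable sequence $(M_k)_{k\in\N}$ computing functions $r_i \colon \Sigma^* \to \Delta^*$ together with a parameter $\kappa$ over $\Sigma$, and suppose $(L',\mathbb{P}') \amredu (L'',\mathbb{P}'')$ via a computable sequence $(N_k)_{k\in\N}$ computing functions $s_j \colon \Delta^* \to \Gamma^*$ together with a parameter $\lambda$ over $\Delta$. I want to build a single sequence of algorithms and a single parameter over $\Sigma$ witnessing $(L,\mathbb{P}) \amredu (L'',\mathbb{P}'')$. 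The natural candidate for the composed reduction function at level $i$ is $t_i := s_{g(i)} \circ r_i$ for a suitable index $g(i)$, and the natural candidate for the composed parameter is $\mu(x) := \max\{\kappa(x),\, g(\kappa(x))\}$ or, more simply, $\mu(x) := g(\kappa(x))$ once $g$ is chosen monotone and at least the identity.

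First I would fix the index function $g$. By condition~2 of the first reduction, for each $i$ we have $r_i(\kappa_i) \in \mathbb{P}'$. Since $\mathbb{P}' \subseteq \mathbb{K}(\lambda)$ (condition~1 of the second reduction), there is some $c_i \in \N$ with $r_i(\kappa_i) \subseteq \lambda_{c_i}$; set $g(i)$ to be, say, $\max\{i, c_1, \dots, c_i\}$ so that $g$ is computable and nondecreasing and $g(i) \ge i$. The composed algorithm at level $i$ simulates $M_i$ on input $x$ to get $r_i(x)$, then simulates $N_{g(i)}$ on $r_i(x)$ to get $s_{g(i)}(r_i(x))$; this is a computable sequence from $\ccex$ (using transitivity of $\ccex$-computable reductions on promise problems, which is the promise-level statement that $\amred$ is transitive — inherited from the classical fact for $\ccex$). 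Define $\mu(x) := g(\kappa(x))$; this is computable and total since $\kappa$ and $g$ are, so $\mu$ is a parameter over $\Sigma$, and $\mu_i = \kappa_{m(i)}$ where $m(i)$ is the largest $k$ with $g(k) \le i$ — in particular $\mu_i \subseteq \kappa_i$ since $g$ is the identity-or-larger, hence $\mathbb{P} \subseteq \mathbb{K}(\kappa) \subseteq \mathbb{K}(\mu)$ needs the reverse containment; more carefully, for $P \in \mathbb{P}$ pick $i$ with $P \subseteq \kappa_i$, then $P \subseteq \kappa_i \subseteq \mu_{g(i)}$, giving $\mathbb{P} \subseteq \mathbb{K}(\mu)$, which is condition~1.

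For condition~2 I would check, for each $i$, that $(L,\mu_i) \amred (L'',t_i(\mu_i))$ via $t_i = s_{g(i)}\circ r_i$ and that $t_i(\mu_i) \in \mathbb{P}''$. On $\mu_i = \kappa_{m(i)}$ we have $\kappa(x) \le m(i) \le i$, so by the promise properties of $r_i$ (valid on $\kappa_i \supseteq \mu_i$) the instance is correctly translated into $r_i(\kappa_i) \subseteq \lambda_{c_i} \subseteq \lambda_{g(i)}$; then $s_{g(i)}$ correctly translates everything in $\lambda_{g(i)}$ into $s_{g(i)}(\lambda_{g(i)}) \in \mathbb{P}''$, and $t_i(\mu_i) \subseteq s_{g(i)}(\lambda_{g(i)})$, so $t_i(\mu_i) \in \mathbb{P}''$ by closure under subsets. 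Composing the two ``$x \in L \Leftrightarrow r_i(x) \in L'$'' and ``$y \in L' \Leftrightarrow s_{g(i)}(y) \in L''$'' biconditionals gives the correctness clause. For condition~3 (the coherence condition), suppose $\mu(x) \le i$, i.e. $g(\kappa(x)) \le i$, so in particular $\kappa(x) \le i$; condition~3 for the first reduction gives $r_{\kappa(x)}(x) = r_i(x)$, and I additionally want $s_{g(\kappa(x))}(r_{\kappa(x)}(x)) = s_{g(i)}(r_i(x))$ — here I would invoke condition~3 for the second reduction applied to $\lambda$ together with the fact that $r_{\kappa(x)}(x) = r_i(x) \in r_i(\kappa_i) \subseteq \lambda_{g(i)}$, so $\lambda$ of that point is $\le g(i)$, giving $s_{\lambda(\cdot)}(\cdot) = s_{g(i)}(\cdot)$ and similarly $= s_{g(\kappa(x))}(\cdot)$; to make this clean I may need to take $g(i) \ge c_i$ uniformly (already arranged) and possibly redefine $t_i$ so that the inner index is driven by $\lambda(r_i(x))$ rather than by $g(i)$ directly.

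The main obstacle I anticipate is exactly this last point: the coherence condition~3 is the place where the three reductions genuinely have to be ``stitched'' rather than merely ``stacked,'' because the second reduction's coherence is phrased in terms of its own parameter $\lambda$ on $\Delta^*$, and I must route the value $\kappa(x)$ of the outer parameter through $g$ in a way that is consistent with how $\lambda$ sees the intermediate string $r_i(x)$. The clean fix is to not hard-code $s_{g(i)}$ but rather have the level-$i$ composed algorithm compute $y = r_i(x)$ and then run $N_{g(i)}$ — and then separately argue, using $\lambda(y) \le g(i)$ for all relevant $y$, that $N_{g(i)}$'s output agrees with $N_{\lambda(y)}$'s output; once that ``downward stability in the index'' is established for $(N_k)$ (which follows from condition~3 of the second reduction), the composed coherence falls out. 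Everything else is routine: closure of $\ccex$ under composition for the algorithmic clause, monotonicity of $g$ for the parameterization containments, and chaining biconditionals for correctness.
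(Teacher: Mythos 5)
Your composition strategy is structurally the same as the paper's (compose the two reductions, enlarge the inner index so that intermediate instances land in a controlled slice of the second parameter, get the image condition from closure under subsets, and get coherence from the downward stability of the second reduction), but there is a genuine gap at the very first step: the index function $g$ is not shown to be computable, and in general it is not. You choose $c_i$ with $r_i(\kappa_i)\subseteq\lambda_{c_i}$ and set $g(i)=\max\{i,c_1,\dots,c_i\}$, asserting that $g$ is computable. The $c_i$ are only guaranteed to \emph{exist} (from $r_i(\kappa_i)\in\mathbb{P}'\subseteq\mathbb{K}(\lambda)$); there is no effective procedure to find such a bound from $i$, since $\kappa_i$ may be infinite, the predicate ``$r_i(\kappa_i)\subseteq\lambda_c$'' is a universally quantified statement that is undecidable in general, and $\sup_{x\in\kappa_i}\lambda(r_i(x))$ need not be computable in $i$. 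Definition~\ref{def:urn} requires the witnessing $\kappa$ to be a parameter in the paper's sense (total \emph{and computable}) and the sequence of reduction algorithms to be computable; your $\mu(x)=g(\kappa(x))$ and your level-$i$ machine ``run $M_i$, then $N_{g(i)}$'' are therefore not legitimate witnesses as written, and all three conditions you verify afterwards rest on them.

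The paper's proof is arranged precisely so that no such precomputed slice bound is needed: the composed machine at level $i$ runs $M_i$ and then $M'_i$ (same index in both stages, so the sequence is trivially computable), and the composed parameter is defined pointwise as $\kappa''(x)=\max\{\kappa(x),\kappa'(r_{\kappa(x)}(x))\}$, which is computable because $\kappa$, $\kappa'$ and the $(M_k)_{k\in\N}$ are. The existence of a level $j$ with $r_i(\kappa_i)\subseteq\kappa'_j$ is then invoked only in the non-effective parts of the argument, namely $\mathbb{P}\subseteq\mathbb{K}(\kappa'')$ and $r''_i(\kappa''_i)\in\mathbb{P}''$ (the latter via $r''_i(\kappa''_i)\subseteq r'_j(\kappa'_j)$ and closure of parameterizations under subsets), where mere existence suffices. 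Your closing remark about driving the inner index by $\lambda(r_i(x))$ gestures at exactly this fix; once you replace $g$ by the pointwise definition and use the same index in both stages, your correctness, image, and coherence arguments go through essentially as in the paper. But as it stands, the computability of $g$ is a real hole, not a routine detail.
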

\begin{proof}[Proof of Lemma~\ref{lem:trans}]
    Let $(L,\mathbb{P}), (L',\mathbb{P}'),  (L'',\mathbb{P}'')$ be parameterized problems over alphabets $\Sigma,\Delta,\Gamma$, respectively. 
    Let $(L,\mathbb{P}) \pmredu (L',\mathbb{P}')$ via $(M_k)_{k \in \N}$ and parameter $\kappa$ over $\Sigma$; $M_i$ computes the total function $r_i \colon \Sigma^* \rightarrow \Delta^*$ for all $i \in \N$. 
    Let $(L',\mathbb{P}') \pmredu (L'',\mathbb{P}'')$ via $(M'_k)_{k \in \N}$ and parameter $\kappa'$ over $\Delta$; $M'_i$ computes the total function $r'_i \colon \Delta^* \rightarrow \Gamma^*$ for all $i \in \N$.     
    For $i \in \N$ let $M_i''$ be the DTM which on input $x \in \Sigma^*$ computes $r_i'(r_i(x))$ by running $M_i$ and $M_i'$.  For $x \in \Sigma^*$ let $\kappa''(x) = \max \{ \kappa(x), \kappa'(r_{\kappa(x)}(x)) \} $. Let $r''_i(x) = r'_i(r_i(x))$ for all $i \in \N$. We claim that $(L',\mathbb{P}') \pmredu (L'',\mathbb{P}'')$ via $(M''_k)_{k \in \N}$ and $\kappa''$.  
    
    First, we argue that $\mathbb{P} \subseteq \mathbb{K}(\kappa'')$. Let $P \in \mathbb{P}$. There exists an $i \in \N$ such that $P \subseteq \kappa_i$. This means $\kappa(x) \leq i$ for all $x \in P$. 
    It holds that $r_i(\kappa_i) \in \mathbb{P}'$. Therefore there exists a $j \in \N$ such that $r_i(\kappa_i) \subseteq \kappa'_j$.  This means $\kappa'(r_i(x)) \leq j$ for all $x \in P$. Due to the third condition of Definition~\ref{def:urn} it holds that $r_{\kappa(x)}(x) = r_i(x)$ for all $x \in P$. This means for all $x \in P$ it holds that $\kappa'(r_{\kappa(x)}(x)) \leq j$ and thus $\kappa''(x) \leq \max\{i,j\}$, i.e.~$P \subseteq \kappa''_l$ with $l = \max\{i,j\}$.
    
    Secondly, we argue that for all $i \in \N$ it holds that (1) $(L,\kappa''_i) \amred (L'',r''_i(\kappa''_i))$ via $r''_i$ and (2) $r''_i(\kappa''_i) \in \mathbb{P}''$. Let $i \in \N$. For (1) we show $x \in L \Leftrightarrow r''_i(x) \in L''$ for all $x \in \kappa''_i$. Let $x \in \kappa''_i$, which implies $\kappa(x) \leq i$. It follows that $x' := r_{\kappa(x)}(x) = r_i(x)$ and thus $x \in L \Leftrightarrow x' \in L'$. It holds that $\kappa'(x') \leq i$. This implies $x'' := r'_{\kappa'(x')}(x') =  r'_{i}(x')$ and therefore $x' \in L' \Leftrightarrow x'' \in L''$. It holds that $x'' = r'_i(x') = r'_i(r_i(x)) = r''_i(x)$ which concludes our claim. For (2) we show that $r''_i(\kappa''_i) \in \mathbb{P}''$. It holds that $r_i(\kappa_i) \in \mathbb{P}'$. This means there exists a $j \in \N$ such that $r_i(\kappa_i) \subseteq \kappa'_j$.
    We assume w.l.o.g.~that $j \geq i$. 
    It holds that $r'_j(\kappa'_j) \in \mathbb{P}''$.  We claim that $r''_i(\kappa''_i) \subseteq r'_j(\kappa'_j)$, which implies $r''_i(\kappa''_i) \in \mathbb{P}$. Let $x'' \in r''_i(\kappa''_i)$. This means there exists an $x \in \kappa''_i$ such that $x'' = r''_i(x)$. It holds that $x \in \kappa_i$ and thus $r_i(x) \in \kappa'_j$ and therefore $r'_j(r_i(x)) \in r'_j(\kappa'_j)$. It holds that $x'' = r'_i(r_i(x)) = r_j'(r_i(x))$ since $\kappa'(r_i(x)) \leq i \leq j$, which concludes our claim.   
    
    Thirdly, we show that for all $i \in \N$ and $x \in \Sigma^*$ with $\kappa''(x) \leq i$ it holds that $r''_{\kappa''(x)}(x) = r''_i(x)$. It holds that $x' := r_{\kappa(x)}(x) = r_{\kappa''(x)}(x) = r_i(x)$ because $\kappa(x) \leq \kappa''(x) \leq i$. It remains to argue that $r'_i(x') = r'_{\kappa''(x)}(x')$. It holds that $\kappa'(x') \leq \kappa''(x)$ and therefore $r'_{\kappa'(x')}(x') = r'_{\kappa''(x)}(x') = r'_i(x')$.
\end{proof}

\begin{lemma}
    $\paramui{\P}$ is closed under $\pmredu$. 
    \label{lem:uclosure}
\end{lemma}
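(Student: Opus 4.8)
The plan is to compose the reduction with the algorithm sequence witnessing $(L',\mathbb{P}')\in\paramui{\P}$, very much in the spirit of the proof of Lemma~\ref{lem:trans}. So assume $(L,\mathbb{P})\pmredu(L',\mathbb{P}')$ via a computable sequence $(M_k)_{k\in\N}$ from $\P$, where $M_i$ computes the total function $r_i\colon\Sigma^*\to\Delta^*$, together with a parameter $\kappa$ over $\Sigma$ satisfying the three conditions of Definition~\ref{def:urn}; and assume $(L',\mathbb{P}')\in\paramui{\P}$ via a computable sequence $(N_k)_{k\in\N}$ from $\P$ and a parameter $\lambda$ over $\Delta$ with $\mathbb{P}'\subseteq\mathbb{K}(\lambda)$ and $L'\cap\lambda_i=L(N_i)\cap\lambda_i$ for all $i$. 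For each $i$ let $P_i$ be the DTM that on input $x$ computes $r_i(x)$ by simulating $M_i$ and then runs $N_i$ on $r_i(x)$; since $M_i,N_i\in\P$, the machine $P_i$ runs in polynomial time and $(P_k)_{k\in\N}$ is a computable sequence from $\P$. Define $\mu(x):=\max\{\kappa(x),\lambda(r_{\kappa(x)}(x))\}$, which is a parameter because $\kappa$, $\lambda$ and $(M_k)_{k\in\N}$ are computable and total. The claim is that $(L,\mathbb{P})\in\paramui{\P}$ is witnessed by $(P_k)_{k\in\N}$ and $\mu$.

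To check $\mathbb{P}\subseteq\mathbb{K}(\mu)$, take $P\in\mathbb{P}$ and pick $i$ with $P\subseteq\kappa_i$. By the second condition of Definition~\ref{def:urn}, $r_i(\kappa_i)\in\mathbb{P}'\subseteq\mathbb{K}(\lambda)$, so there is a $j$ with $r_i(\kappa_i)\subseteq\lambda_j$. For $x\in P$ we have $\kappa(x)\le i$, hence $r_{\kappa(x)}(x)=r_i(x)$ by the third condition of Definition~\ref{def:urn}, and $r_i(x)\in r_i(\kappa_i)\subseteq\lambda_j$, so $\lambda(r_{\kappa(x)}(x))\le j$ and therefore $\mu(x)\le\max\{i,j\}$; thus $P\subseteq\mu_{\max\{i,j\}}$.

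To check $L\cap\mu_i=L(P_i)\cap\mu_i$, fix $i$ and $x\in\mu_i$, so that $\kappa(x)\le\mu(x)\le i$ and $\lambda(r_{\kappa(x)}(x))\le\mu(x)\le i$. From $\kappa(x)\le i$ and the second condition of Definition~\ref{def:urn} (which for $\ccex=\P$ says $(L,\kappa_i)\pmred(L',r_i(\kappa_i))$ via $r_i$) we get $x\in L\Leftrightarrow r_i(x)\in L'$. The inequality $\kappa(x)\le i$ also gives $r_{\kappa(x)}(x)=r_i(x)$, so $\lambda(r_i(x))\le i$, i.e.\ $r_i(x)\in\lambda_i$, whence $r_i(x)\in L'\Leftrightarrow r_i(x)\in L(N_i)$. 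Finally $r_i(x)\in L(N_i)\Leftrightarrow x\in L(P_i)$ by construction of $P_i$. Chaining these three equivalences gives $x\in L\Leftrightarrow x\in L(P_i)$, as required.

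I do not expect a real obstacle here; the proof is bookkeeping once the right witnesses are identified. The only delicate point — and the reason the third condition of Definition~\ref{def:urn} is used twice above — is that $\mu$ is defined through the functions $r_{\kappa(x)}$ while the algorithms $P_i$ use the fixed function $r_i$: coherence of the $r_i$'s ($r_{\kappa(x)}(x)=r_i(x)$ whenever $\kappa(x)\le i$) is precisely what lets the bound hidden in $\mu$ be transferred to $r_i(x)$ and combined with the single algorithm $N_i$.
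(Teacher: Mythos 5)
Your proof is correct and follows essentially the same route as the paper: compose the reduction machines with the solving machines, take the parameter $\max\{\kappa(x),\lambda(r_{\kappa(x)}(x))\}$ (the paper's $\kappa^*$), and invoke the third condition of Definition~\ref{def:urn} exactly where the paper does to transfer the bound from $r_{\kappa(x)}(x)$ to $r_i(x)$. No gaps; nothing further to add.
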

\begin{proof}
 Let $(L,\mathbb{P})$ and $(L',\mathbb{P}')$ be parameterized problems over alphabets $\Sigma$ and $\Delta$, respectively. Let $(L,\mathbb{P}) \pmredu (L',\mathbb{P}')$ via $(M_k)_{k \in \N}, \kappa$ and let $r_i \colon \Sigma^* \rightarrow \Delta^*$ be the function computed by $M_i$ for $i \in \N$. Let $(L',\mathbb{P}')$ be in $\paramui{\P}$ via $(M'_k)_{k \in \N}, \kappa'$.   
For $k \in \N$ let the DTM $M^*_k$ be defined as follows. On input $x \in \Sigma^*$ it computes $x' = r_k(x)$ by running $M_k$ on $x$ and then runs $M'_k$ on $x'$. Let $\kappa^*(x) = \max \{ \kappa(x) , \kappa'(r_{\kappa(x)}(x) ) \}$ for $x \in \Sigma^*$.
    We claim that $(L,\mathbb{P})$ is in $\paramui{\P}$ via $(M^*_k)_{k \in \N}, \kappa^*$.
    The sequence $(M^*_k)_{k \in \N}$ and $\kappa^*$ are both computable. The runtime of $M^*_k$ is polynomial since it is the sum of the runtime of $M_k$ and $M'_k$.  
    
   We show that $L \cap \kappa^*_i = L(M^*_i) \cap \kappa^*_i$ holds for all $i \in \N$. Let $x \in \kappa^*_i$. This implies $\kappa(x) \leq i$ yand $\kappa'(r_{\kappa(x)}(x)) \leq i$. If we run $M^*_i$ on $x$ it first computes $r_i(x)$ and then runs $M'_i$ on $r_i(x)$. Since $x \in \kappa_i$ it holds that $r_i$ correctly translates $x$, i.e.~$x \in L \Leftrightarrow r_i(x) \in L'$. Moreover, $M'_i$ correctly decides $r_i(x)$ if $\kappa'(r_i(x)) \leq i$. This premise holds because $r_i(x) = r_{\kappa(x)}(x)$ due to the third condition of Definition~\ref{def:urn}. 
   
   It remains to argue that $\mathbb{P} \subseteq \mathbb{K}(\kappa^*)$. 
   Let $P \in \mathbb{P}$. There exists an $a \in \N$ such that $P \subseteq \kappa_a$. It holds that $(L,\kappa_a)$ reduces to $(L',r_a(\kappa_a))$ and $r_a(\kappa_a) \in \mathbb{P}'$. There exists a $b \in \N$ such that $r_a(\kappa_a) \subseteq \kappa'_b$. Let $c = \max  \{a,b\}$. We claim that $P \subseteq \kappa^*_{c}$. Let $x \in P$. 
   Since $x \in \kappa_a$ it holds that $\kappa(x) \leq a \leq c$. This implies $r_{\kappa(x)}(x) = r_a(x)$ because $\kappa(x) \leq a$. Additionally, $r_a(x) \in \kappa'_b$ which means $\kappa'(r_a(x)) \leq b \leq c$. Therefore $\kappa''(x) \leq \max\{a,b\} \leq c$. 
\end{proof}

For the sake of brevity we define strongly uniform reductions only for the classes $\TIME(n^c)$. 

\begin{definition}[Strongly Uniform Reduction]
	Let $(L,\mathbb{P})$ and $(L',\mathbb{P}')$ be parameterized problems and let $c \in \N$. We say $(L,\mathbb{P}) \timecredsu (L',\mathbb{P}')$ if there exists 	a computable sequence of algorithms $(M_k)_{k \in \N}$ from $\TIME(n^c)$, a parameter $\kappa$ over $\Sigma$ and a computable, total function $f \colon \N \rightarrow \N$ such that:
	\begin{itemize}
		\item $(L,\mathbb{P}) \timecredu (L',\mathbb{P}')$ via $(M_k)_{k \in \N}, \kappa$
		\item for all $k \in \N$ and $x \in \Sigma^*$  it holds that $M_k$ on $x$ runs at most $f(k) \cdot {|x|}^c$ steps
	\end{itemize}
\end{definition}

We say $(L,\mathbb{P}) \mysufptred (L',\mathbb{P}')$  if there exists a $c \in \N$ such that $(L,\mathbb{P}) \timecredsu (L',\mathbb{P}')$.

\begin{lemma}
    $\FPT'$ is closed under $\mysufptred$. 
\end{lemma}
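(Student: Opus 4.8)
The plan is to rerun the proof of Lemma~\ref{lem:uclosure} while additionally tracking polynomial degrees and explicit step bounds. Suppose $(L,\mathbb{P})\mysufptred(L',\mathbb{P}')$ and $(L',\mathbb{P}')\in\FPT'$, over alphabets $\Sigma$ and $\Delta$. Unfolding the definitions, I would fix $c_1\in\N$ witnessing $(L,\mathbb{P})\mysufptred(L',\mathbb{P}')$, given by a computable sequence $(M_k)_{k\in\N}$ from $\TIME(n^{c_1})$, a parameter $\kappa$ over $\Sigma$, and a computable total $f\colon\N\to\N$ such that $M_i$ computes a total $r_i\colon\Sigma^*\to\Delta^*$ within $f(k)\cdot|x|^{c_1}$ steps on input $x$; and fix $c_2\in\N$ with $(L',\mathbb{P}')\in\paramsui{\TIME(n^{c_2})}$, given by a computable sequence $(M'_k)_{k\in\N}$ from $\TIME(n^{c_2})$, a parameter $\kappa'$ over $\Delta$, and a computable total $g\colon\N\to\N$ with $M'_k$ running within $g(k)\cdot|x|^{c_2}$ steps. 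Note $c_1,c_2\geq 1$. The goal is then $(L,\mathbb{P})\in\paramsui{\TIME(n^c)}$ for $c:=c_1 c_2$, which suffices because $\FPT'=\bigcup_{c\in\N}\paramsui{\TIME(n^c)}$.

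I would reuse the construction from the proof of Lemma~\ref{lem:uclosure} verbatim: let $M^*_k$ be the machine that on input $x$ first computes $x':=r_k(x)$ by simulating $M_k$ and then simulates $M'_k$ on $x'$, and set $\kappa^*(x):=\max\{\kappa(x),\,\kappa'(r_{\kappa(x)}(x))\}$. Then $(M^*_k)_{k\in\N}$ is computable and $\kappa^*$ is a computable total parameter over $\Sigma$. The argument of Lemma~\ref{lem:uclosure} establishing $L\cap\kappa^*_i=L(M^*_i)\cap\kappa^*_i$ for all $i$ and $\mathbb{P}\subseteq\mathbb{K}(\kappa^*)$ only uses conditions (1)--(3) of Definition~\ref{def:urn}, the equalities $L'\cap\kappa'_i=L(M'_i)\cap\kappa'_i$, and $\mathbb{P}'\subseteq\mathbb{K}(\kappa')$; none of this depends on the ambient class being $\P$ rather than $\TIME(n^{c_2})$, so it carries over unchanged. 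What is left is to check the two conditions proper to $\paramsui{\TIME(n^c)}$: that $(M^*_k)_{k\in\N}$ is a sequence from $\TIME(n^c)$, and that there is a computable total $h\colon\N\to\N$ with $M^*_k$ running within $h(k)\cdot|x|^c$ steps.

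Both follow from a single estimate. Simulating $M_k$ on $x$ costs at most $f(k)\cdot|x|^{c_1}$ steps, and since a Turing machine writes at most one cell per step the output obeys $|x'|=|r_k(x)|\leq f(k)\cdot|x|^{c_1}$ (the case $|x|=0$ being trivial). Then simulating $M'_k$ on $x'$ costs at most $g(k)\cdot|x'|^{c_2}\leq g(k)\cdot f(k)^{c_2}\cdot|x|^{c_1 c_2}$ steps, plus $O(|x'|)$ overhead for handing $x'$ over; using $c_2\geq 1$ to absorb the lower-order terms into the leading one, the total is at most $h(k)\cdot|x|^c$ for a suitable computable total $h$ assembled from $f$, $g$, and $c_2$. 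In particular each $M^*_k$ runs in time $O(|x|^c)$, so $(M^*_k)_{k\in\N}$ is a sequence from $\TIME(n^c)$, and the second clause of the definition of $\paramsui{\TIME(n^c)}$ holds via $h$. Combined with the previous paragraph this gives $(L,\mathbb{P})\in\paramsui{\TIME(n^c)}\subseteq\FPT'$.

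The main obstacle, such as it is, is the degree bookkeeping in the last paragraph: composing the reduction with the solver can inflate instance lengths polynomially, so the composed machine has exponent $c_1 c_2$ rather than $\max\{c_1,c_2\}$. The point that keeps this under control is exactly the strong uniformity of the reduction --- because $|r_k(x)|$ is bounded by the explicitly bounded running time of $M_k$, the exponent stays a genuine constant and the factor $h(k)$ stays computable, which is what $\paramsui{\TIME(n^c)}$ requires.
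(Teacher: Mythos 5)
Your proof is correct and follows essentially the same route as the paper: reuse the machines $M^*_k$ and parameter $\kappa^*$ from the proof of Lemma~\ref{lem:uclosure}, observe that the correctness and $\mathbb{P}\subseteq\mathbb{K}(\kappa^*)$ arguments carry over unchanged, and then just add the explicit running-time accounting. In fact your degree bookkeeping is \emph{more} careful than the paper's: the paper bounds the runtime of $M^*_k$ by $f(k)\cdot n^c + f'(k)\cdot n^d \leq f(k)f'(k)\cdot n^{c+d}$, silently treating the input of $M'_k$ as having length $n$, whereas you correctly account for the intermediate blow-up $|r_k(x)|\leq f(k)\cdot|x|^{c_1}$, which yields exponent $c_1c_2$ and a factor $h(k)$ involving $f(k)^{c_2}$ --- the honest bound that the lemma needs (and which still lands in $\FPT'$, so the statement is unaffected).
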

\begin{proof}
    Let $(L,\mathbb{P})$ and $(L',\mathbb{P}')$ be parameterized problems over alphabets $\Sigma$ and $\Delta$, respectively. 
    Let $(L,\mathbb{P}) \mysufptred (L',\mathbb{P}')$ via $(M_k)_{k \in \N}, \kappa, f$ and $c \in \N$. This means for all $k \in \N$ the DTM $M_k$ runs at most $f(k) \cdot n^c$ steps on all inputs of length $n$.  
    Let $(L',\mathbb{P}')$ be in $\FPT'$ via $(M'_k)_{k \in \N}, \kappa', f'$ and $d \in \N$. For all $k \in \N$ the DTM $M'_k$ runs at most $f'(k) \cdot n^{d}$ steps on all inputs of length $n$.
    
    We claim that $(L,\mathbb{P})$ is in $\FPT'$ via $(M^*_k)_{k \in \N}$, $\kappa^*$, $f^*(k) = f(k) \cdot f'(k)$ and $c+d$ where $M^*_k$ and $\kappa^*$ are defined as in the proof of Lemma~\ref{lem:uclosure}. For all $k \in \N$ it holds that $M^*_k$ runs at most $f(k) \cdot n^c + f'(k) \cdot n^d \leq f^*(k) \cdot n^{c+d}$ steps on all inputs of length $n$.
    The remaining conditions are satisfied due to the same argument given in the proof of Lemma~\ref{lem:uclosure}.
\end{proof}

\section{Relation to Existing Formalization}
\label{sec:connection}
We use Flum and Grohe's formalization to connect the definitions in the previous section to the existing theory. First, we show how our definition of a parameterized problem relates to the one given by Flum and Grohe. Let us call a total function $\kappa \colon \Sigma^* \rightarrow \N$ a n.n.c.~parameter (not necessarily computable) for an alphabet $\Sigma$.

\begin{theorem}
    Let $\mathbb{P}$ be a set of languages over an alphabet $\Sigma$. It holds that $\mathbb{P}$ is a parameterization iff there exists a n.n.c.~parameter $\kappa$ over $\Sigma$ such that $\mathbb{P} = \mathbb{K}(\kappa)$. 
\end{theorem}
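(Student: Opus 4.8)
The plan is to prove both implications directly, with the direction ``$\kappa$ exists $\Rightarrow$ $\mathbb{P}$ is a parameterization'' being a routine verification and the direction ``$\mathbb{P}$ is a parameterization $\Rightarrow$ $\kappa$ exists'' requiring a short construction.

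For ``$\kappa$ exists $\Rightarrow$ $\mathbb{P}$ is a parameterization'' I would take an n.n.c.~parameter $\kappa$ with $\mathbb{P} = \mathbb{K}(\kappa)$ and check the three axioms of a parameterization for $\mathbb{K}(\kappa)$ one at a time. Closure under subsets is immediate, since $L' \subseteq L \subseteq \kappa_c$ gives $L' \subseteq \kappa_c$. Closure under finite union follows from the fact that $\kappa_c \subseteq \kappa_{c'}$ whenever $c \le c'$, so that $L_1 \subseteq \kappa_{c_1}$ and $L_2 \subseteq \kappa_{c_2}$ imply $L_1 \cup L_2 \subseteq \kappa_{\max\{c_1,c_2\}}$. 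Every singleton $\{x\}$ lies in $\mathbb{K}(\kappa)$ because $\{x\} \subseteq \kappa_{\kappa(x)}$, which uses that $\kappa$ is total. Finally, the countable family $\{\kappa_1,\kappa_2,\dots\}$ witnesses the third axiom directly from the definition of boundedness.

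For ``$\mathbb{P}$ is a parameterization $\Rightarrow$ $\kappa$ exists'' I would fix a countable family $\{P_1,P_2,\dots\} \subseteq \mathbb{P}$ as in the third axiom and first pass to the increasing chain $Q_i := P_1 \cup \dots \cup P_i$. Each $Q_i$ lies in $\mathbb{P}$ by closure under finite union, and the chain still has the absorbing property that every $P \in \mathbb{P}$ satisfies $P \subseteq Q_i$ for some $i$. I then define $\kappa(x) := \min\{ i \in \N : x \in Q_i \}$. This function is total: by the second axiom $\{x\} \in \mathbb{P}$, hence $\{x\} \subseteq Q_i$ for some $i$, so the set on the right is nonempty. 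Because the $Q_i$ are increasing one checks $\kappa_c = Q_c$ for every $c$, and then $\mathbb{K}(\kappa) = \mathbb{P}$ follows by double inclusion: if $L \subseteq \kappa_c = Q_c$ then $L \in \mathbb{P}$ by closure under subsets since $Q_c \in \mathbb{P}$, and conversely if $L \in \mathbb{P}$ then $L \subseteq Q_i = \kappa_i$ for some $i$ by the absorbing property, so $L \in \mathbb{K}(\kappa)$.

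I do not anticipate a genuine obstacle; the two points that deserve care are ensuring $\kappa$ is \emph{total}, which is precisely what the singleton axiom is designed to guarantee, and the minor bookkeeping of replacing the generating family by the increasing chain $(Q_i)_{i \in \N}$ so that $\kappa_c$ takes the clean form $Q_c$ and the final equality $\mathbb{K}(\kappa) = \mathbb{P}$ becomes transparent.
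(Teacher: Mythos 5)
Your proof is correct and follows essentially the same route as the paper: the paper also defines the increasing unions $P'_c = \bigcup_{i=1}^c P_i$ (your $Q_c$) and sets $\kappa(x)$ to be the least $k$ with $x \in P'_k$, and verifies the axioms of a parameterization for $\mathbb{K}(\kappa)$ in the converse direction. You merely spell out a few verifications (totality via the singleton axiom, and the double inclusion $\mathbb{K}(\kappa)=\mathbb{P}$) that the paper leaves implicit.
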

\begin{proof}    
    ``$\Rightarrow$'': Let $\mathbb{P}$ be a parameterization.  We define a n.n.c.~parameter $\kappa$ over $\Sigma$ such that $\mathbb{P} = \mathbb{K}(\kappa)$. Let $\{P_1,P_2,\dots \}$ be a countable subset of $\mathbb{P}$ whose closure under subsets equals $\mathbb{P}$. Let $P'_c = \bigcup_{i=1}^c P_i$. It holds that $P'_c$ is in $\mathbb{P}$ for every $c \in \N$ because $\mathbb{P}$ is closed under union. Define $\kappa(x)$ as the least $k$ such that $x \in P'_k$.

    ``$\Leftarrow$'': Let $\kappa$ be a n.n.c.~parameter over $\Sigma$ such that $\mathbb{P} = \mathbb{K}(\kappa)$. Let $P,P'$ be languages over $\Sigma$ which are both bounded by $\kappa$. This means $P \subseteq \kappa_i$ and $P' \subseteq \kappa_j$ for some $i,j \in \N$. We assume w.l.o.g.~that $i \leq j$ and therefore $L \cup L' \subseteq \kappa_j$. Therefore $\mathbb{K}(\kappa)$ is closed under union. Since $\kappa$ is total it follows that $\{x\}$ is in $\mathbb{K}(\kappa)$ for all $x \in \Sigma^*$. A countable subset of $\mathbb{K}(\kappa)$ such that its closure under subsets equals $\mathbb{K}(\kappa)$ is given by $\{\kappa_1, \kappa_2, \dots \} $.        
\end{proof}

Stated differently, a parameterization is a set of languages that is bounded by some (possibly uncomputable) parameter. This means instead of $(L,\kappa)$ we have been looking at $(L,\mathbb{K}(\kappa))$.
We still have to argue that defining the complexity in terms of $(L,\mathbb{K}(\kappa))$ is the same as defining it in terms of $(L,\kappa)$. More precisely, could there be two parameters $\kappa, \tau$ with $\mathbb{K}(\kappa) = \mathbb{K}(\tau) = \mathbb{P}$ such that the complexity of $(L,\kappa)$ differs from the one of $(L,\tau)$? This would make it ill-defined to speak about the complexity of $(L,\mathbb{P})$. If one considers the complexity of $(L,\kappa)$ to be its closure under uniform fpt-reductions then this is not the case. For two n.n.c.~parameters $\kappa,\tau$ over $\Sigma$ let us say $\kappa \preccurlyeq \tau$ if there exists a function $f \colon \N \rightarrow \N$ such that $\kappa(x) \leq f(\tau(x))$ holds for all $x \in \Sigma^*$. We say $\kappa$ and $\tau$ are equivalent if $\kappa \preccurlyeq \tau$ and $\tau \preccurlyeq \kappa$.
\begin{fact}        
    Let $\kappa,\tau$ be n.n.c.~parameters over the alphabet $\Sigma$. It holds that $\kappa \preccurlyeq \tau$ iff $\mathbb{K}(\tau) \subseteq \mathbb{K}(\kappa)$. 
\end{fact}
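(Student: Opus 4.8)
The plan is to prove the two implications separately, directly from the definitions, using one simple observation: for every $c \in \N$ the slice $\tau_c = \set{x \in \Sigma^*}{\tau(x) \leq c}$ is trivially bounded by $\tau$, so $\tau_c \in \mathbb{K}(\tau)$, and it is in fact the inclusion-maximal language in $\mathbb{K}(\tau)$ that is bounded by $c$. The symmetric remark holds for $\kappa$.

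For the direction ``$\kappa \preccurlyeq \tau \Rightarrow \mathbb{K}(\tau) \subseteq \mathbb{K}(\kappa)$'', I would first replace the witnessing function $f$ by its monotone upper envelope $\hat f(n) = \max_{m \leq n} f(m)$, so that $\kappa(x) \leq \hat f(\tau(x))$ still holds for all $x \in \Sigma^*$ and $\hat f$ is non-decreasing. Then, given any $L \in \mathbb{K}(\tau)$, choose $c \in \N$ with $L \subseteq \tau_c$; for every $x \in L$ we have $\tau(x) \leq c$, hence $\kappa(x) \leq \hat f(\tau(x)) \leq \hat f(c)$. Thus $L \subseteq \kappa_{\hat f(c)}$, which gives $L \in \mathbb{K}(\kappa)$.

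For the converse ``$\mathbb{K}(\tau) \subseteq \mathbb{K}(\kappa) \Rightarrow \kappa \preccurlyeq \tau$'', I would build $f$ by probing the assumption on the slices $\tau_c$. Since $\tau_c \in \mathbb{K}(\tau) \subseteq \mathbb{K}(\kappa)$ for every $c \in \N$, there is some $d \in \N$ with $\tau_c \subseteq \kappa_d$; define $f(c)$ to be the least such $d$. This $f \colon \N \rightarrow \N$ is total because such a $d$ exists for each $c$. Finally, for an arbitrary $x \in \Sigma^*$, setting $c = \tau(x)$ we have $x \in \tau_c \subseteq \kappa_{f(c)}$, i.e.~$\kappa(x) \leq f(c) = f(\tau(x))$, which is exactly $\kappa \preccurlyeq \tau$.

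I do not expect a real obstacle here; the argument is a short unwinding of definitions. The only points worth flagging in the writeup are that the function $f$ in the definition of $\preccurlyeq$ need not be monotone, so the forward direction is cleanest after passing to the monotone envelope, and that in the converse direction one should note explicitly that $f$ is well-defined and total, which is immediate from $\tau_c \in \mathbb{K}(\tau)$.
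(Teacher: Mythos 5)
Your proposal is correct and follows essentially the same route as the paper: both directions rest on the slices $\tau_c$ and the correspondence $\tau_c \subseteq \kappa_{f(c)}$, and your converse direction (defining $f(c)$ from $\tau_c \in \mathbb{K}(\tau) \subseteq \mathbb{K}(\kappa)$) is exactly the paper's argument. Your forward direction is in fact slightly tidier than the paper's, since you justify the monotonicity assumption explicitly via the envelope $\hat f$ and argue directly rather than by induction on $i$.
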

\begin{proof}        
    ``$\Rightarrow$'': Let $\kappa \preccurlyeq \tau$ via a monotone function $f \colon \N \rightarrow \N$, i.e.~$\kappa(x) \leq f(\tau(x))$ for all $x \in \Sigma^*$. We show inductively that for every $i \in \N$ it holds that $\tau_i \subseteq \kappa_{f(i)}$, which implies $\mathbb{K}(\tau) \subseteq \mathbb{K}(\kappa)$. For the base case $i = 1$ it holds that $x \in \tau_1$ iff $\tau(x) = 1$. It follows that $\kappa(x) \leq f(1)$ and therefore $x \in \kappa_{f(1)}$. For the inductive step $i \rightarrow i + 1$ it must be the case that $x$ is either in $\tau_{i+1} \setminus \tau_i$ or in $\tau_i$. If $x$ is in $\tau_i$ then by induction hypothesis it holds that $x \in \kappa_{f(i)}$. Since $f$ is monotone it follows that $x \in \kappa_{f(i+1)}$ as well. For the other case it holds that $\tau(x) = i+1$ and therefore $\kappa(x) \leq f(i+1)$ which means $x \in \kappa_{f(i+1)}$.      
    
    ``$\Leftarrow$'': Since $\mathbb{K}(\tau) \subseteq \mathbb{K}(\kappa)$ there exists a function $f \colon \N \rightarrow \N$ such that $\tau_i \subseteq \kappa_{f(i)}$ for all $i \in \N$. We argue that $\kappa(x) \leq f(\tau(x))$ for all $x \in \Sigma^*$. Let $\tau(x) = k$ for some $k \in \N$. Then it holds that $x \in \tau_k$ and therefore $x \in \kappa_{f(k)}$ as well. This means $\kappa(x) \leq f(k) = f(\tau(x))$.
\end{proof}

Thus two parameters are equivalent iff they bound the same set of languages. Observe that if $\kappa$ and $\tau$ are equivalent then $(L,\kappa)$ and $(L,\tau)$ are equivalent w.r.t.~uniform fpt-reductions (by that we mean \cite[Def.~2.1]{flum} without the requirement that the functions $f$ and $g$ have to be computable) for all languages $L$. 
Therefore our definition of a parameterized problem seamlessly fits into the existing theory. 
One can think of a parameter as a representation of a parameterization. For example, it is more natural to define the combination of two parameters $\kappa, \tau$ over the same alphabet in terms of the parameterizations which they represent: 
$$  \mathbb{K}(\kappa + \tau) = \mathbb{K}(\kappa \cdot \tau) = \mathbb{K}(\max \{\kappa , \tau \}) = \mathbb{K}(\kappa) \cap \mathbb{K}(\tau)  $$

Next, we show how slices and promise problems are related. Given a parameterized problem $(L,\kappa)$ and $i \in \N$. Let $P = \set{x}{\kappa(x) = i}$. The $i$-th slice of $(L,\kappa)$ is in $\P$ iff  $\exists L' \in \P : L \cap P = L'$ whereas the promise problem $(L,P)$ is in $\P$ iff $\exists L' \in \P : L \cap P = L' \cap P$. The lack of the second `$\cap P$' in the case of slices is what leads to the conflation of problem and parameter complexity. Also, it shows that membership of a slice implies membership of the corresponding promise problem. 

\begin{fact}
    Let $(L,\kappa)$ be a parameterized problem. If $(L,\kappa)$ is in $\XP_{\mathrm{nu}}$ then $(L,\mathbb{K}(\kappa))$ is in $\parami{\P}$. 
\end{fact}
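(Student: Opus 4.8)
The plan is to unfold the two definitions in play and, for each language $P$ in the parameterization $\mathbb{K}(\kappa)$, exhibit a language $L' \in \P$ that agrees with $L$ on $P$. Recall that $(L,\mathbb{K}(\kappa))$ is in $\parami{\P}$ precisely when the promise problem $(L,P)$ is in $\P$ for every $P \in \mathbb{K}(\kappa)$, and that $(L,P)$ is in $\P$ means there is some $L' \in \P$ with $L \cap P = L' \cap P$. So I would fix an arbitrary $P \in \mathbb{K}(\kappa)$ and construct the witness $L'$.

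Fix $P \in \mathbb{K}(\kappa)$. By definition of $\mathbb{K}(\kappa)$ there is a constant $c \in \N$ with $P \subseteq \kappa_c$, i.e.\ $\kappa(x) \le c$ for all $x \in P$. The hypothesis $(L,\kappa) \in \XP_{\mathrm{nu}}$ says that each slice $(L,\kappa)_i = \set{x \in L}{\kappa(x) = i}$ is in $\P$, so in particular this holds for $i = 1,\dots,c$. I would then set $L' := \bigcup_{i=1}^{c} (L,\kappa)_i$. A finite union of languages in $\P$ is again in $\P$ (run the finitely many polynomial-time procedures one after another), hence $L' \in \P$. Note that this step uses only that the relevant slices lie in $\P$; it does not require $\kappa$ itself to be computable.

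It then remains to check $L \cap P = L' \cap P$. For ``$\subseteq$'': if $x \in L \cap P$, then $x \in P \subseteq \kappa_c$ gives $\kappa(x) = i$ for some $i \le c$, and since $x \in L$ this places $x$ in $(L,\kappa)_i$, hence in $L'$. For ``$\supseteq$'': if $x \in L' \cap P$, then $x$ lies in some slice $(L,\kappa)_i$, and every slice is a subset of $L$, so $x \in L \cap P$. This shows $(L,P) \in \P$, and as $P \in \mathbb{K}(\kappa)$ was arbitrary, $(L,\mathbb{K}(\kappa)) \in \parami{\P}$.

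There is essentially no obstacle here; the one point worth flagging is the contrast between slices and promise problems noted earlier in the text. Membership of the slice $(L,\kappa)_i$ in $\P$ means $L \cap \set{x}{\kappa(x) = i}$ equals some $\P$-language \emph{outright}, which is strictly stronger than the promise-problem condition with its trailing ``$\cap P$'', and it is exactly this extra strength---combined with the boundedness of $P$, which collapses the infinite family of slices into a finite union---that makes the construction of $L'$ work in a single step.
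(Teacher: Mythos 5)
Your proof is correct and follows essentially the same route as the paper: fix $P \in \mathbb{K}(\kappa)$, use $P \subseteq \kappa_c$ to take $L' = \bigcup_{i=1}^{c}(L,\kappa)_i$, which is in $\P$ by closure under finite union, and check $L \cap P = L' \cap P$. The only difference is that you spell out the two inclusions explicitly, which the paper leaves to the reader.
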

\begin{proof}
    Assume that $(L,\kappa)$ is in $\XP_{\mathrm{nu}}$. We show that for every $P \in \mathbb{K}(\kappa)$ it holds that $(L,P)$ is in $\P$. Let $P \in \mathbb{K}(\kappa)$. There exists a $c \in \N$ such that $P \subseteq \kappa_c$. Let $L' = \cup_{i=1}^c (L,\kappa)_i$. Since every slice of $(L,\kappa)$ is in $\P$ and $\P$ is closed under union it follows that $L'$ is in $\P$.  It holds that $L \cap P =  L' \cap P$. 
\end{proof}

The class $\P$ in the above statement can be replaced by any complexity class closed under union. If $\kappa$ is  polynomial-time computable then the converse direction holds as well, i.e.~$(L,\mathbb{K}(\kappa))$ in $\parami{\P}$ implies $(L,\kappa)$ in $\XP_{\mathrm{nu}}$. 

\begin{fact}
    Let $(L,\kappa)$ be a parameterized problem such that $\kappa$ is polynomial-time computable. If $(L,\kappa)$ is in $\FPT$ then $(L,\mathbb{K}(\kappa))$ is in $\FPT'$. 
\end{fact}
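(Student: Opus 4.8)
The plan is to reuse $\kappa$ itself as the algorithm selector and to turn the single fpt-algorithm for $(L,\kappa)$ into a computable sequence of polynomial-time algorithms indexed by the threshold value of $\kappa$. Concretely, suppose $(L,\kappa)$ is in $\FPT$ via an algorithm $A$, a constant $c \in \N$ and a computable function $g \colon \N \to \N$, so that $A$ decides $L$ and runs in time $g(\kappa(x))\cdot|x|^{c}$. Since $\kappa$ is polynomial-time computable, fix $d \in \N$ with $\kappa(x)$ computable in time $|x|^{d}$ (small inputs being absorbed into constants), and replace $g$ by the still-computable monotone function $k \mapsto \max_{i\le k} g(i)$, so that we may assume $g$ is nondecreasing.

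Next I would define, for each $k \in \N$, the algorithm $M_k$ that on input $x$ first computes $\kappa(x)$; if $\kappa(x) > k$ it rejects, and otherwise it simply runs $A$ on $x$ to completion and returns $A$'s answer. No explicit step counter is needed: whenever $M_k$ proceeds to simulate $A$ we have $\kappa(x)\le k$, so $A$ is guaranteed by the fpt-property to halt within $g(\kappa(x))\cdot|x|^{c}\le g(k)\cdot|x|^{c}$ steps. The map $k\mapsto M_k$ is computable, since $M_k$ is a fixed program (invoking the $\kappa$-algorithm, $A$ and $g$) with $k$ hard-coded, and each $M_k$ is total because both $\kappa$ and $A$ are. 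A routine estimate bounds the running time of $M_k$ on an input $x$ by $f(k)\cdot|x|^{c+d}$, where $f$ is a computable function obtained from $g$ and the fixed constants of the construction (e.g.\ $f(k)=a\cdot g(k)+b$ for suitable $a,b\in\N$); in particular each $M_k$ lies in $\TIME(n^{c+d})$.

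Finally I would verify the two clauses of membership in $\paramsui{\TIME(n^{c+d})}$ using $(M_k)_{k\in\N}$, the parameter $\kappa$ and the function $f$ above. The inclusion $\mathbb{K}(\kappa)\subseteq\mathbb{K}(\kappa)$ is trivial, and for each $i\in\N$ every $x\in\kappa_i$ satisfies $\kappa(x)\le i$, so $M_i$ runs $A$ on $x$ and answers correctly; thus $L\cap\kappa_i = L(M_i)\cap\kappa_i$. This is exactly the condition of Definition~\ref{def:upc}, so $(L,\mathbb{K}(\kappa))$ is in $\paramui{\TIME(n^{c+d})}$ via $(M_k)_{k\in\N}$ and $\kappa$, and the running-time bound of the previous paragraph is the second clause; hence $(L,\mathbb{K}(\kappa))\in\paramsui{\TIME(n^{c+d})}\subseteq\FPT'$. (Alternatively one could first check that $(M_k)_{k\in\N}$ is strongly monotone with respect to $L$ and invoke Lemma~\ref{lem:usc}, but here the parameter $\kappa$ works directly.) There is no genuine obstacle in this argument; the only points requiring a little care are collapsing the time bound onto a single exponent — the choice $c+d$ is a deliberate overestimate — and the observation that, once $\kappa(x)$ has been computed and found to be at most $k$, the simulation of $A$ may be run without a clock, which keeps the running time of $M_k$ polynomial in $|x|$ with only a $g(k)$-sized factor.
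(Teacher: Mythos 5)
Your proposal is correct and follows essentially the same route as the paper: define $M_k$ to compute $\kappa(x)$, reject if $\kappa(x)>k$, and otherwise run the fpt-algorithm, then witness membership in $\paramsui{\TIME(n^{c+d})}$ via $(M_k)_{k\in\N}$, $\kappa$ and a monotone version of the fpt cost function. The only cosmetic difference is that the paper states the runtime bound as $n^d+f(k)\cdot n^c$ while you fold it into a single $f(k)\cdot n^{c+d}$ term, which is immaterial.
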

\begin{proof}
    Assume that $(L,\kappa)$ is in $\FPT$ and $\kappa$ is computable in time $n^d$ for some $d \in \N$. This means there exists a DTM $M$, $c \in \N$ and a computable function $f \colon \N \rightarrow \N$ such that $M$ correctly decides $L$ and runs in time $f(\kappa(x)) \cdot {|x|}^c$ for all instances $x$. We assume w.l.o.g.~that $f$ is monotone.
    For $i \in \N$ let the DTM $M_i$ be defined as follows. On input $x$ it checks if $\kappa(x) \leq i$. If this is the case then it runs $M$ on $x$, otherwise it rejects.    
    It holds that $(L,\mathbb{K}(\kappa))$ is in $\paramsui{\TIME(n^{c+d})}$ via $(M_k)_{k \in \N}$, $\kappa$ and $f$. For all $k \in \N$ and all inputs $x$ it holds that $M_k$ on $x$ runs in time  $n^d + f(k) \cdot n^c$. 
\end{proof}

If we strengthen the requirement in Definition~\ref{def:upc} that $\kappa$ has to be polynomial-time computable then the two classes coincide, i.e.~if a parameterized problem $(L,\mathbb{P})$ is in $\FPT'$ via $(M_k)_{k \in \N}, \kappa, f$ then $(L,\kappa)$ is in $\FPT$ via $M$ where $M$ is defined as follows. On input $x$ the DTM $M$ computes $k = \kappa(x)$, then $M_k$ and finally it runs $M_k$ on $x$. 

\begin{fact}
    Let $(L,\kappa)$ and $(L',\kappa')$ be parameterized problems and $\kappa$ is polynomial-time computable. If $(L,\kappa) \fptred (L',\kappa')$ 
    then $(L,\mathbb{K}(\kappa)) \mysufptred (L',\mathbb{K}(\kappa'))$. 
\end{fact}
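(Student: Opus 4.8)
The plan is to take the classical (Flum--Grohe) fpt-reduction witnessing $(L,\kappa)\fptred(L',\kappa')$ and repackage it as a strongly uniform reduction whose selecting parameter is $\kappa$ itself. Unpacking \cite[Def.~2.1]{flum}, such an fpt-reduction consists of a function $r\colon\Sigma^*\to\Delta^*$, a constant $e\in\N$ and computable functions $f,g\colon\N\to\N$ (w.l.o.g.\ monotone) such that $x\in L\Leftrightarrow r(x)\in L'$ for all $x$, the map $r$ is computable in time $f(\kappa(x))\cdot{|x|}^e$, and $\kappa'(r(x))\le g(\kappa(x))$ for all $x$. Since by hypothesis $\kappa$ is polynomial-time computable, I would fix $d\in\N$ with $\kappa(x)$ computable in time ${|x|}^d$ and set $c=\max\{d,e\}$.

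Next I would define, for each $i\in\N$, a DTM $M_i$ that on input $x$ first computes $k=\kappa(x)$; if $k\le i$ it computes and outputs $r(x)$, and otherwise it outputs a fixed default word. Writing $r_i$ for the function computed by $M_i$, the sequence $(M_k)_{k\in\N}$ is clearly computable uniformly in $k$ and each $M_i$ is total. The runtime of $M_k$ on $x$ is at most ${|x|}^d$ for the computation of $\kappa$, plus---only in the branch where $\kappa(x)\le k$, so by monotonicity of $f$ at most $f(k)\cdot{|x|}^e$---the cost of evaluating $r$; hence $M_k$ runs within $(f(k)+1)\cdot{|x|}^{c}$ steps, so $(M_k)_{k\in\N}$ is a computable sequence from $\TIME(n^c)$ with computable resource bound $f^*(k)=f(k)+1$.

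It then remains to verify that $(M_k)_{k\in\N}$ and the parameter $\kappa$ witness $(L,\mathbb{K}(\kappa))\timecredu(L',\mathbb{K}(\kappa'))$ in the sense of Definition~\ref{def:urn}; combined with the runtime bound this yields $(L,\mathbb{K}(\kappa))\timecredsu(L',\mathbb{K}(\kappa'))$ and hence $(L,\mathbb{K}(\kappa))\mysufptred(L',\mathbb{K}(\kappa'))$. Condition~1, $\mathbb{K}(\kappa)\subseteq\mathbb{K}(\kappa)$, is immediate. For condition~2 and a fixed $i$: on $\kappa_i$ we have $r_i=r$, so $x\in L\Leftrightarrow r_i(x)\in L'$ for all $x\in\kappa_i$, i.e.\ $(L,\kappa_i)\amred(L',r_i(\kappa_i))$ via $r_i$; and $r_i(\kappa_i)=\{\,r(x)\mid\kappa(x)\le i\,\}$ is contained in $\kappa'_{g(i)}$ because $\kappa'(r(x))\le g(\kappa(x))\le g(i)$ there, so $r_i(\kappa_i)\in\mathbb{K}(\kappa')$. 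Condition~3 holds because $\kappa(x)\le i$ forces both $r_{\kappa(x)}(x)=r(x)$ and $r_i(x)=r(x)$.

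I expect the substantive point---rather than a genuine obstacle---to be the use of the hypothesis that $\kappa$ is polynomial-time computable: it is needed precisely so that $M_i$ can, within a time budget polynomial in $|x|$ and depending only on $i$, determine whether it is ``responsible'' for $x$ (whether $\kappa(x)\le i$) and hence whether it may invoke $r$ at all. This is exactly the phenomenon discussed in Section~\ref{ss:fg}. The only care required in the construction is that $M_i$ must \emph{not} evaluate $r$ when $\kappa(x)>i$, since otherwise its running time would not be bounded in terms of $i$ and $|x|$ alone; everything else is routine bookkeeping.
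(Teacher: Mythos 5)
Your proposal is correct and follows essentially the same route as the paper's proof: assume w.l.o.g.\ that $f,g$ are monotone and $\kappa$ is computable in the stated polynomial time, define $M_i$ to check whether $\kappa(x)\le i$ and only then evaluate the classical reduction (outputting a default word otherwise), and verify the conditions of Definition~\ref{def:urn} with the same bounds $(f(k)+1)\cdot|x|^c$ and $r_i(\kappa_i)\subseteq\kappa'_{g(i)}$. The only differences are cosmetic (e.g.\ your explicit $f^*(k)=f(k)+1$ and $c=\max\{d,e\}$ bookkeeping).
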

\begin{proof}
    Assume $(L,\kappa)$ and $(L',\kappa')$ are over the alphabets $\Sigma$ and $\Delta$, respectively. 
    Let $(L,\kappa) \fptred (L',\kappa')$. This means there exists a DTM $M$, $c \in \N$ and computable, total functions $f,g \colon \N \rightarrow \N$ such that  $M$ computes a reduction function $R \colon \Sigma^* \rightarrow \Delta^*$ ($x \in L \Leftrightarrow R(x) \in L'$) and runs in time $f(\kappa(x)) \cdot {|x|}^c$  and  $\kappa'(R(x)) \leq g(\kappa(x)) $ for all $x \in \Sigma^*$. We assume w.l.o.g.~that $f$ and $g$ are monotone and $\kappa$ is computable in time $n^c$.
    For $i \in \N$ let the DTM $M_i$ be defined as follows. On input $x$ it checks if $\kappa(x) \leq i$. If this is the case then it runs $M$ on $x$, otherwise it outputs the empty word. For $i \in \N$ let $r_i \colon \Sigma^* \rightarrow \Delta^*$ denote the reduction function computed by $M_i$.
    We claim that  $(L,\mathbb{K}(\kappa)) \timecredsu (L',\mathbb{K}(\kappa'))$ via $(M_k)_{k \in \N}, \kappa, f$. For all $k \in \N$ and $x \in \Sigma^*$ the DTM $M_k$ runs in time $(f(k)+1) \cdot {|x|}^c$. It remains to argue that $(L,\mathbb{K}(\kappa)) \timecredu (L',\mathbb{K}(\kappa'))$ via $(M_k)_{k \in \N}, \kappa$. 
    The first and third condition of Definition~\ref{def:urn} are simple to verify. For the second condition we have to show that (1) $(L,\kappa_i) \timecred (L,r_i(\kappa_i))$ via $r_i$ and (2) $r_i(\kappa_i)\in \mathbb{K}(\kappa')$ holds for all $i \in \N$. Let $x \in \kappa_i$. Part (1) holds because $r_i(x) = R(x)$ for all $x \in \kappa_i$ and thus $r_i$ is a correct reduction. Part (2) holds because $\kappa'(r_i(x)) = \kappa'(R(x)) \leq g(\kappa(x)) \leq g(i)$ for all $x \in \kappa_i$ and therefore $r_i(\kappa_i) \subseteq \kappa'_{g(i)}$.
\end{proof}

In a nutshell, the existing classes and reductions in parameterized complexity can be seen as a special case of our definitions where the parameter $\kappa$ in Definition~\ref{def:upc} and \ref{def:urn} is required to be polynomial-time computable.

\section{Conclusion}
\label{sec:final}
The premise of our formalization is that the parameter value should not be part of the input, which we justify as follows. A parameterized problem consists of the three components input, parameter and output. The definition of a parameterized problem as subset of $\Sigma^* \times \N$ conflates input and parameter. It is not clear why this conflation is acceptable or even necessary. Given an arbitrary parameterized problem $L \subseteq \Sigma^* \times \N$, it is unclear what membership in $L$ intuitively means. This is directly related to the question at the end of Section~\ref{ss:df} whether every subset of $\Sigma^* \times \N$ corresponds to a natural parameterized problem. A consequence of this conflation is that the structure of parameterizations becomes invisible. For example, one cannot see that the set of parameterizations forms a bounded lattice. Also, having to add the parameter value to the input can be problematic when studying more restricted models of computation such as in the context of descriptive complexity.

The obvious alternative is to compute the parameter value from the input if it is not part thereof. The problem with this approach is that suddenly the complexity of computing the parameter value becomes part of the complexity of the parameterized algorithm. In practice, this is not the case. For example, if one designs an algorithm for some problem parameterized by tree-width then computing the tree-width is certainly not part of the parameterized algorithm; the algorithm designer expects the tree-width of the input to be given. If one considers a parameterized algorithm to be a single algorithm then this leaves no room but to make the parameter value part of the input. However, we have argued in the beginning of Section~\ref{sec:nar} that a parameterized algorithm can be seen as a sequence of algorithms. From this perspective, the algorithm designer must find an algorithm $A_k$ for every $k \in \N$ such that $A_k$ solves the problem for all inputs where the tree-width is at most $k$ (a promise problem). In conjunction with the observation that the complexity of a parameterized problem $(L,\kappa)$ actually depends on $(L,\mathbb{K}(\kappa))$, this leads to the formalization presented here. 

Our definitions naturally translate to other kinds of complexity and thus provide us with a unified formal notion of parameterized complexity, which only requires formalizing promise problems and reductions among them. In turn, one immediately obtains parameterized classes and reductions. 
For example, it should be unambiguously clear what $\parami{\APX}, \parami{\ComplexityFont{IncP}}$ and $\paramui{\ComplexityFont{\#P}}$ mean, assuming the base classes are known. More generally, any complexity-theoretic notion which can be sensibly defined on a restricted set of inputs can be interpreted in a parameterized context. This is in contrast to slices, which do not generalize beyond decision problems and also conflate problem and parameter complexity. The view that slices are inappropriate is also shared by Goldreich who states that ``[Slices] miss the true nature of the original computational problem [...]'' and ``the conceptually correct perspective [...] is a promise problem [...]'' \cite[p.~255f]{goldreich}.

We consider the application of a parameterized algorithm $(A_k)_{k \in \N}$ to be a two-step process. Given an input $x$ (or a set of inputs) the first step is to determine which algorithm $A_i$ to use (algorithm selection). 
In our definition of a uniform parameterized complexity class a parameterized algorithm is accompanied by a parameter $\kappa$ which allows us to determine this information: $x$ is guaranteed to be correctly solved by $A_i$ for all $i \geq \kappa(x)$. 
The second step is to run the selected algorithm on $x$ (algorithm execution). Unlike in the definition of $\FPT$ we only require the algorithm execution to `run in fpt-time'. This means the complexity of algorithm selection, i.e.~computing an upper bound on $\kappa(x)$, is not part of the parameterized complexity in our framework. This follows the generally recognized principle of separation of concerns and also accounts for the fact that the algorithm selection step can profit from domain-specific knowledge which is not available in theory. 

In summary, our formalization tightly corresponds to the intuitive notion of a parameterized problem, does not require any arbitrary choices (such as polynomial-time computable parameters) and neatly generalizes to other kinds of complexity. Also, rather than looking at classical and parameterized complexity as two incomparable entities our formalization shows how classical complexity can be formally interpreted as a special case of parameterized complexity.

Is there a natural parameterized algorithm $(A_k)_{k \in \N}$ for some natural problem $X$ which is not strongly monotone w.r.t.~$X$? The problem $X$ can probably be expected to be outside of $\NP$ (see the second paragraph after Lemma~\ref{lem:usc}). 

\subparagraph*{Acknowledgements.} We thank the anonymous reviewer who pointed out that there is a generic translation from Downey and Fellows' formalization of a parameterized problem to Flum and Grohe's which preserves fixed-parameter tractability.

\printbibliography[heading=bibintoc]

\end{document}